\NeedsTeXFormat{LaTeX2e}
\documentclass{tlp}

\usepackage{aopmath}

\newtheorem{definition}{Definition}
\newtheorem{observation}{Observation}

\submitted{23 December 2008}
  \revised{8 July 2012}
  \accepted{20 July 2012}

\begin{document}

\title[Transforming floundering into success]{Transforming floundering into success}
\author[L. Naish]
{Lee Naish\\
Department of Computing and Information Systems \\
University of Melbourne \\
Melbourne 3010 \\
Australia\\
lee@unimelb.edu.au}

\maketitle

\begin{abstract}
We show how logic programs with ``delays'' can be transformed
to programs without delays in a way which preserves information
concerning floundering (also known as deadlock).  This allows
a declarative (model-theoretic), bottom-up or goal independent
approach to be used for analysis and debugging of properties
related to floundering.  We rely on some previously introduced
restrictions on delay primitives and a key observation which allows
properties such as groundness to be analysed by approximating the
(ground) success set.
This paper is to appear
in Theory and Practice of Logic Programming (TPLP).
\end{abstract}

\begin{keywords}
Floundering, delays, coroutining, program analysis, abstract
interpretation, program transformation, declarative debugging
\end{keywords}

\section{Introduction}
\label{sec:intro}

Constructs for delaying calls have long been a popular extension to
conventional Prolog.  Such constructs allow sound implementation of
negation, more efficient versions of ``generate and test'' algorithms,
more flexible modes and data-flow, a mechanism for coordinating
concurrent execution and forms of constraint programming.  They also
introduce a new class of errors into logic programming: rather than
computing the desired result, a computation may \emph{flounder} (some
calls are delayed and never resumed).  Tools for locating possible bugs,
either statically or dynamically, are desirable.  Static analysis can
also be used to improve efficiency and in the design of new languages
where data and control flow are known more precisely at compile time.

The core contribution of this paper is to show how a program with
``delays'' can be transformed into a program without delays whose
(ground) success set contains much information about floundering
and computed answers of the original program.  Some technical
results are given which extend known results about floundering,
and these are used to establish the properties of two new
program transformations.  The main motivation we discuss is
program analysis, though we also mention declarative debugging.
Analysis of properties such as which goals flounder can be quite
subtle, even for very simple programs.

The term floundering was originally introduced in the context of negation,
where negated calls delay until they are ground, and sometimes they never
become ground.  In this paper we don't directly deal with negation but
our approach can equally be used for analysing this form of delaying
of negated calls.  Subcomputations are also delayed in some other
forms of resolution, for example, those which use tabling.  For these
computational models delaying is more determined by the overall structure
of the computation (for example, recursion) rather than the instantiation
state of variables in the call, and it is doubtful our methods could be
adapted easily.

This paper is structured as follows.  In section \ref{sec:delprim} delay
declarations are described and the procedural semantics of Prolog with
delays is discussed informally.  In section \ref{sec:examples} we give
some sample programs which use delays.  In section \ref{sec:anal1} we
discuss in more detail some properties of delaying code which, ideally,
we would like to be able to analyse.  In section \ref{sec:ground}
we briefly discuss an observation concerning computed answers which
is important to our approach.  In section \ref{sec:sldf} we review a
theoretical model of Prolog with delays and extend some previous results
concerning floundering.  In section \ref{sec:elim} we give a program
transformation that converts floundering into success.  In section
\ref{sec:cfs} a more precise characterisation of floundering is provided,
along with a second transformation.  In section \ref{sec:ddis} we briefly
discuss declarative debugging of floundering and a related model-theoretic
semantics.  In section \ref{sec:related} we discuss some related work
and we conclude in section \ref{sec:conc}.

\section{Delay declarations and their procedural meaning}
\label{sec:delprim}

Dozens of different control annotations have been proposed for logic
programming languages.  In the programs in this paper we use
``delay'' declarations of
the form \verb@:- delay A if C@ where \texttt{A} is an atom $p(V_1,
V_2, \ldots, V_N)$, the $V_i$ are distinct variables, $p/N$ is a
predicate and \texttt{C} is a condition consisting of \texttt{var/1},
\texttt{nonground/1} (with arguments the $V_i$), ``\texttt{,}'' and
``\texttt{;}''.  Procedurally, a call $p(V_1, V_2, \ldots, V_N)$ delays
if \texttt{C} holds (with the conventional meaning of \texttt{var}
and \texttt{nonground}).

The procedural semantics of Prolog with delays is typically difficult
to describe precisely and, to our knowledge, is not done in any manuals
for the various Prolog systems which support delays.  Here we describe
the procedural semantics of NU-Prolog, and where the imprecision lies;
other systems we know of are very similar.  By default, goals are executed
left to right, as in standard Prolog.  If the leftmost sub-goal delays
(due to some delay annotation in the program), the next leftmost is tried.
Thus the leftmost non-delaying subgoal is selected.  Complexities arise
when delayed goals become further instantiated and may be resumed.  When a
delayed goal becomes instantiated enough to be called (due to unification
of another call with the head of a clause), the precise timing of when is
it resumed can be difficult to predict.  With a single call to resume, it
is done immediately after the head unification is completed\footnote{In
some systems it may occur after the head unification \emph{plus} calls
to certain built-in predicates at the start of the matching clause.}.
With multiple calls to resume, they are normally resumed in the order in
which they were first delayed.  It is as if they are inserted at the start
of the current goal in this order.  However, this is not always the case.

Some calls may delay until multiple variables are instantiated to
non-variable terms.
This is implemented by initially delaying until one of those variables
is instantiated.  When that occurs, the call is effectively resumed but
may immediately delay again if the other variables are not instantiated.
Similarly, when delaying until some term is ground, the delaying occurs on
one variable at a time and the call can be resumed and immediately delayed
again multiple times.  The order in which multiple calls are resumed
depends on when they were \emph{most recently} delayed.  This depends on
the order in which the variables are considered, which is not specified.
In NU-Prolog, the code generated to delay calls is combined with the
code for clause indexing and it is difficult to predict the order in
which different variables are considered without understanding a rather
complex part of the compiler.

The situation is even worse in parallel logic programming systems.
In Parallel NU-Prolog \cite{pnuprolog} the default computation rule is
exactly the same as for NU-Prolog.   However, if an idle processor is
available a call which is instantiated enough may delay and be (almost)
immediately resumed on another processor.  Even with total knowledge
of the implementation, the precise execution of a program cannot be
determined.  Any program analysis based on procedural semantics must
respect the fact that the computation rule is generally not known
precisely but (we hope) not lose too much information.

\section{Example code with delays}
\label{sec:examples}

\begin{figure}
\begin{verbatim}
:- delay append(As, Bs, Cs) if var(As), var(Cs).
append([], As, As).
append(A.As, Bs, A.Cs) :- append(As, Bs, Cs).

append3(As, Bs, Cs, ABCs) :- append(Bs, Cs, BCs), append(As, BCs, ABCs).

:- delay reverse(As, Bs) if var(As), var(Bs).
reverse([], []).
reverse(A.As, Bs) :- append(Cs, [A], Bs), reverse(As, Cs).
\end{verbatim}
\caption{Multi-moded append, append3 and reverse.}
\label{fig_rev}
\end{figure}

We now present two small examples of code which uses delays.
The first will be used later to explain our techniques.
Figure \thefigure\ gives a version of \texttt{append} which delays until
the first or third argument is instantiated.  This delays (most) calls
to \texttt{append} which have infinite derivations.  Delaying such calls
allows \texttt{append} to be used more flexibly in other predicates.
For example, \texttt{append3} can be used to \texttt{append} three lists
together or to split one list into three.  Without the delay declaration
for \texttt{append}, the latter ``backwards'' mode would not terminate.
With the delay declaration, the first call to \texttt{append} delays.
The second call then does one resolution step, instantiating variable
\texttt{BCs}.  This allows the first call to resume, do one resolution
step and delay again, et cetera.

In a similar way, this version of \texttt{reverse} works in both
forwards and backwards modes---if either argument is instantiated to
a list it will compute the other argument.  If the second argument is
instantiated, no calls are delayed.  However, if only the first argument
is instantiated, all the calls to \texttt{append} initially delay and
after the last recursive call to \texttt{reverse} succeeds, the multiple
calls to \texttt{append} proceed in an interleaved fashion.  For any
given mode, the code for \texttt{append3} and \texttt{reverse} can be
statically reordered to produce a version which works without delaying.
The Mercury compiler does such reordering automatically \cite{mercury},
but without automatic reordering it requires some slightly tricky coding
to produce such flexible versions of these predicates.

\begin{figure}
\begin{verbatim}
submaxtree(Tree, NewTree) :-
        submaxtree1(Tree, Max, Max, NewTree).

submaxtree1(nil, _, 0, nil).
submaxtree1(t(L, E, R), GMax, Max, t(NewL, NewE, NewR)) :-
        submaxtree1(L, GMax, MaxL, NewL),
        submaxtree1(R, GMax, MaxR, NewR),
        max3(E, MaxL, MaxR, Max),
        plus(NewE, GMax, E). % delays; later mode o,i,i

max3(A, B, C, D) :- ...

:- delay plus(A, B, C) if var(A), var(B) ; var(A), var(C) ; var(B), var(C).
\end{verbatim}
\caption{Filling slots in a tree}
\end{figure}

Figure \thefigure\ is a variant of the \texttt{maxtree} program
(see \cite{boye95}, for example) which takes a tree and constructs
a new tree containing copies of a logic variable in each node,
then binds the variable to a number (the maximum number in
the original tree).  The \texttt{submaxtree} program fills
each node in the new tree with the original value \emph{minus}
the maximum.  This is done by delaying a call to \texttt{plus}
for each node until the maximum is known, then resuming all these
delayed calls.  We assume a version of \texttt{plus} which delays
until two of its three arguments are instantiated; NU-Prolog has
such a predicate built in.  All calls to \texttt{plus} become
sufficiently instantiated at the same time (when \texttt{GMax}
becomes instantiated).  In most systems they will be called in
the order they were delayed.  If \texttt{plus} only worked in the
forward mode the calls would not be sufficiently instantiated
and the computation would flounder. We also assume a predicate
\texttt{max3/4} which calculates the maximum of three numbers. It
is not possible to statically reorder the clause bodies to
eliminate the delays.  Even dynamic reordering clause bodies each
time a clause instance is introduced (also known as a \emph{local}
computation rule) is insufficient.  Without coroutining, two passes
over the tree are necessary, doubling the amount of ``boilerplate''
traversal code---the first to compute \texttt{GMax}
and the second to build the new tree.

\section{Analysis of code with delays}
\label{sec:anal1}

Delays can be used to write concise and flexible code, the behaviour of
which can be very subtle.  For example, \cite{naish:ddf} shows that when
bugs are introduced to a four-clause permutation program with delays,
a wide variety of counter-intuitive behaviour results.  Even with such a
tiny program, the combination of interleaved execution and backtracking
makes understanding why it misbehaves very challenging.  Another tiny
example is the (arguably correct) definition of \texttt{reverse} in
Figure \ref{fig_rev}.  Having first written and used equivalent code
around twenty-five years ago, the author did not become fully aware
of its floundering properties until the preparation of this paper.
It was incorrectly thought that all calls to \texttt{reverse} which have
an infinite number of solutions with different list lengths (such as
\verb@reverse([a,b,c|Xs],Ys)@) would flounder.  Section\ \ref{sec_using_f}
gives a precise characterisation of the actual behaviour.
Automated methods of analysis of code with delays are highly desirable
because manual analysis is just too complex to be reliable.

Analysis of code with delays can address many different issues.
It may be that we expect code to succeed or finitely fail for certain
classes of goals but some such goals may actually flounder, typically
with a computed answer less instantiated than expected---this is the
main focus of the declarative debugging work of \cite{naish:ddf}, also
discussed in section \ref{sec:ddis}.  In section \ref{sec:elim} we give a
transformation which allows analysis of computed answers (of successful
or floundered derivations) which can detect such cases.  Conversely,
we may expect certain goals to flounder when actually they succeed.
This is particularly important if the goal has an infinite number of
solutions and is part of a larger computation---success can result in
non-termination where floundering would not.  In section \ref{sec:cfs}
we give a further transformation which captures floundering precisely.
Both methods reduce the problem of analysing a program with delays
to analysing the success set of a program without delays.  A deeper
understanding of floundering can also help us in other ways.  For example,
although the declarative debugger of \cite{naish:ddf} doesn't rely on
either of these transformations directly, it is based on the insights of
this paper.  Similarly, these insights may help us optimise code, either
by simplifying delay annotations or, more significantly, eliminating
them entirely (possibly with some reordering of code).  They may also
help our understanding of termination properties of code with delays.

\section{Semantics and computed answers}
\label{sec:ground}

\begin{figure}
\begin{tabular}{ccc}
$P_1$ & \hspace{2cm}$P_2$\hspace{2cm} & $P_3$ \\
		&		& \\
\texttt{p(a).}	&\texttt{p(X).}	& \texttt{p(a).}\\
		&		& \texttt{p(X).}\\
\texttt{q(a).}	&\texttt{q(a).}	& \texttt{q(a).}\\
\end{tabular}
\caption{Differing semantics dependent on the set of function symbols}
\end{figure}

Before proceeding further, with more technical material, we make an
observation about computed answers which is fundamental to our work.
The conventional approach to the semantics of logic programs is that
the set of function symbols in the language is precisely that in the
program---see, for example, the textbook \cite{Llo84} which combines
and refines some of the original work of van Emden, Kowalski, Apt and
others.  This means that, unlike in Prolog, new function symbols cannot
occur in the goal (or the semantics of the program differs depending on
the goal). An alternative is to define the (typically infinite) set of
function symbols \textit{a priori} and assume that both the program and
goals use a subset of these function symbols.  This approach has been
examined in \cite{Apt96}, where various results are given, and earlier in
\cite{maher}, where forms of equivalence of logic programs are explored.
For example, Figure \thefigure\ gives three programs with different sets
of computed answers for \texttt{p(Y)}.  They are all equivalent using the
``Lloyd'' declarative semantics but with extra function symbols programs
$P_2$ and $P_3$ are equivalent but $P_1$ is not.  One advantage of the
latter semantics is that the universal closure of a goal is true if and
only if it succeeds with a computed answer substitution which is empty
(or simply a renaming of variables)---see the discussion in \cite{Ross89}.
Another is that the model-theoretic and fixed-point semantics can capture
information about (non-ground) computed answers and, as we will show
later, floundering!  Since the Lloyd semantics deals only with sets of
ground atoms, this fact is somewhat surprising, and does not seem to
have been exploited for program analysis until now.

\begin{definition}
We partition the set of function symbols into
\emph{program function symbols}
and \emph{extraneous function symbols}.
Programs and goals may only contain program function symbols.
\emph{Program atoms} are atoms containing only program function symbols.
\end{definition}

\begin{observation}
Substitutions in derivations (including computed answer substitutions)
contain only program function symbols.
\label{obs_pfs}
\end{observation}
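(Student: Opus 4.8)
The plan is to argue by induction on the number of resolution steps in a derivation, using the fact that a computation rule (whether or not it involves delays) only determines which atom is selected at each step and does not otherwise affect the substitutions produced. The sole source of new bindings in an SLD derivation is the most general unifier (mgu) computed when a selected atom is unified with the suitably renamed head of a program clause, so it suffices to track the function symbols introduced by these unifiers.

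First I would isolate the key property of unification: if two atoms $A$ and $B$ each contain only program function symbols, then any mgu $\theta$ of $A$ and $B$ binds every variable to a term built solely from program function symbols. This follows from the standard unification algorithm (for example, the Martelli--Montanari algorithm), in which every function symbol occurring in the computed bindings already occurs in one of the input terms. Renaming a clause apart renames variables only and introduces no function symbols at all, so no step can introduce an extraneous function symbol.

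With this lemma in hand I would set up the induction. The base case is the initial goal, which by the definition above contains only program function symbols. For the inductive step, assume the current goal $G_i$ contains only program function symbols and that the accumulated substitution $\theta_1 \cdots \theta_i$ contains only program function symbols. The selected atom of $G_i$ is unified with the head of a renamed clause; both contain only program function symbols (the clause by the definition, after renaming), so by the lemma the mgu $\theta_{i+1}$ contains only program function symbols. Composing substitutions over program function symbols, and applying them to atoms over program function symbols, yields only program function symbols, so $G_{i+1}$ and $\theta_1 \cdots \theta_{i+1}$ again contain only program function symbols. The computed answer substitution is the restriction of such a composition to the variables of the initial goal, so it too contains only program function symbols.

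The main obstacle is really just making the unification lemma precise: one must fix a definition of what it means for a substitution to \emph{contain} a function symbol (namely, that the symbol occurs in the range of one of its bindings) and confirm that the chosen unification algorithm never fabricates symbols absent from its inputs. Once this is pinned down the induction is routine, and, crucially, the argument is wholly independent of the computation rule, so delaying and floundering play no role beyond changing which atoms are selected.
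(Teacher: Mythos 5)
Your proposal is correct and is essentially the argument the paper intends: the paper dispatches this observation in one line ("a consequence of most general unifiers being used"), and your induction merely makes explicit the underlying fact that an mgu of two terms over the program function symbols, computed by a standard unification algorithm, never introduces a symbol absent from its inputs. The elaboration is sound and, as you note, independent of the computation rule, so it covers SLDF derivations as well.
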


This is a consequence of \emph{most general} unifiers being used
(indeed, programming with delays makes little sense without this).
Non-ground computed answers can be identified in the success set by the
presence of extraneous function symbols.  For example, if $\bowtie$ is
an extraneous function symbol then an atom such as \texttt{p($\bowtie$)}
appears in the success set if and only if it is an instance of some
non-ground computed answer.  If we assume there are an infinite number
of terms whose principal function symbol is an extraneous function
symbol then computed answers can be captured more precisely---we
make this assumption later for our analysis of floundering.  Note this
semantics cannot determine whether a variable exists in \emph{all}
computed answers (or derivations) of a goal---in both $P_2$ and $P_3$
of Figure \thefigure\ the success set contains \texttt{p($\bowtie$)}
and \texttt{p(a)}.  However, it does precisely capture groundness in
all computed answers (or variables in \emph{some} computed answer),
a property which has attracted much more interest.  For example, many
consider it of interest that in all computed answers of \texttt{append},
if the third argument is ground the second argument is also ground.
Using the semantics we suggest, this is equivalent to saying if $\bowtie$
occurs in the second argument it also occurs in the third argument.
If we can find a superset of the success set (for example, a model)
which has this property, the groundness dependency must hold.  Thus a
small variation to the Lloyd semantics leads to significant additional
precision while retaining the simple model-theoretic and fixed-point
semantics and the relationship between them.

\section{SLDF resolution}
\label{sec:sldf}

A model of Prolog with delays, SLDF resolution, is presented in
\cite{naish:ctl:92}.  Here we review the model and main results,
concerning ground atoms, and extend these result to non-ground atoms.
We define the non-ground flounder set, which approximates the floundering
behaviour of a program.  However, we first give discuss two important
closure properties which hold for SLD resolution (where there is no
floundering).

\begin{proposition}[Closure properties]
\label{clprops}
If an atom $A$ has a successful SLD derivation with computed answer $A$ (an
empty computed answer substitution) then, using the same program clauses
in the derivation
\begin{itemize}
\item[1)]
any atom with $A$ as an instance has a computed answer with $A$ as an
instance, and
\item[2)]
any atom $A \theta$ has a computed answer $A \theta$.
\end{itemize}
\end{proposition}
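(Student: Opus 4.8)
The plan is to prove both parts by transforming the given refutation of $A$ step by step, matching each resolution step of the new derivation to the corresponding step of the original one. Proceeding this way is essential because the proposition insists that the new derivations use \emph{the same program clauses}: an argument via soundness and completeness of SLD resolution would establish that the more general atom and the instance $A\theta$ have suitable refutations, but completeness only yields \emph{some} refutation and would not preserve the sequence of input clauses. Throughout I assume the derivation of $A$ is standardised apart and, after composing with a renaming, that its computed answer substitution is literally the identity on $\mathrm{vars}(A)$, so that $A\eta = A$ where $\eta = \theta_1\cdots\theta_n$ is the composition of the mgus used.

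For part 1, let $B$ be an atom with $A = B\sigma$. I would invoke the standard Lifting Lemma for SLD resolution (see \cite{Llo84}): since $B\sigma = A$ has a refutation via clauses $c_1,\ldots,c_n$ and mgus $\theta_1,\ldots,\theta_n$, there is a refutation of $B$ using the same clauses $c_1,\ldots,c_n$, with mgus $\theta_1',\ldots,\theta_n'$ and a substitution $\gamma$ satisfying $\sigma\,\theta_1\cdots\theta_n = \theta_1'\cdots\theta_n'\,\gamma$. Writing $\eta' = \theta_1'\cdots\theta_n'$ for the computed answer of $B$ and applying both sides to $B$ gives $B\sigma\eta = B\eta'\gamma$, that is $A\eta = (B\eta')\gamma$. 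Since $A\eta = A$, this reads $A = (B\eta')\gamma$, exhibiting $A$ as an instance of the computed answer $B\eta'$ of $B$, which is exactly the claim.

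For part 2 I would prove, by induction on derivation length, the slightly more general statement needed to carry the induction through the clause bodies: if a goal $G$ has a refutation with computed answer $G\eta$ using clauses $c_1,\ldots,c_n$, then for every substitution $\delta$ the goal $G\eta\delta$ has a refutation using the same clauses whose computed answer is $G\eta\delta$ (empty). Part 2 is the instance where $G = A$ and $\eta$ is the identity on $\mathrm{vars}(A)$, so that $A\eta\delta = A\delta$ ranges over all instances $A\theta$. In the inductive step I would take the first resolution step of $G$'s derivation, selecting atom $a$ and unifying it with the head $H$ of $c_1$ via $\theta_1$; since $a\eta\delta$ is an instance of $a\theta_1 = H\theta_1$ and hence of a fresh variant of $H$, the selected atom $a\eta\delta$ still unifies with $H$, and its mgu instantiates only the clause variables, leaving $G\eta\delta$ untouched on the goal side. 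The resolvent is then the $\eta\delta$-instance of the original resolvent, to which the induction hypothesis applies with the clauses $c_2,\ldots,c_n$.

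The routine parts---composition and restriction of substitutions, standardisation apart, and the bookkeeping of fresh variants of clause bodies---are straightforward. I expect the main obstacle to be precisely the key step of the part 2 induction: showing that instantiating the goal by $\delta$ never causes unification to fail and never instantiates the goal further than $\delta$ already does. This is exactly where the empty-answer hypothesis is indispensable---it guarantees that the original derivation never genuinely constrains $\mathrm{vars}(A)$, so every clause head encountered remains general enough to unify with the instantiated selected atom, and the argument would break for an atom whose computed answer is non-trivial.
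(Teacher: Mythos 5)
First, a point of reference: the paper states Proposition~\ref{clprops} \emph{without proof}---it is offered as standard background on SLD resolution, and the machinery that would actually justify it (Lemma~\ref{lem_mgi_Gth}) only appears later, in the floundering section. So your attempt has to be judged on its own merits. Part~1 is correct: the lifting lemma of \cite{Llo84} applied to $B$ and $B\sigma=A$ yields a refutation of $B$ with the same clauses and composed mgu $\eta'$ satisfying $\sigma\eta=\eta'\gamma$ on the relevant variables, whence $A=A\eta=B\sigma\eta=(B\eta')\gamma$; and your observation that a soundness-and-completeness argument would not preserve the clause sequence is exactly the right reason for arguing syntactically.

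Part~2, however, has a genuine gap at the step you pass over quickly: the claim that ``the resolvent is then the $\eta\delta$-instance of the original resolvent'' is false in general. When $a\eta\delta$ is resolved with a fresh variant of $c_1$, the mgu binds only the head variables of that variant; the variables occurring in the body of $c_1$ but not in its head remain free in the new resolvent, whereas in $R_1\eta_1$ (the old first resolvent instantiated by its computed answer $\eta_1=\theta_2\cdots\theta_n$) those local variables are typically bound. Concretely, with the program $p(X) \leftarrow q(Y)$ and $q(a)$, the goal $p(b)$ produces the resolvent $q(Y')$, not $q(a)$. The new resolvent is thus more instantiated than $R_1$ on the goal variables but strictly \emph{less} instantiated than $R_1\eta_1$ on the clause-local ones; it is not of the form $R_1\eta_1\delta'$, so your induction hypothesis---which speaks only about instances of the computed answer---cannot be applied to it, and the induction does not close. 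The repair is to strengthen the hypothesis to a most-general-instance statement: for any $\theta$ on $\mathrm{vars}(G)$, a derivation of $G\theta$ with the same clause and atom selection reaches, at step $N$, an m.g.i.\ of $G_N\theta$ and $G_N\alpha_N$, with $G\theta\alpha'_N$ an m.g.i.\ of $G\theta$ and $G\alpha_N$. This is precisely the paper's Lemma~\ref{lem_mgi_Gth}, proved there by a simultaneous-unification encoding rather than by induction on steps; under the empty-answer hypothesis $G\alpha_N$ is a variant of $G$, the m.g.i.\ of $G\theta$ and $G$ is $G\theta$ itself, and unification cannot fail along the way because $\alpha_N$ composed with a suitably renamed $\theta$ unifies every call--head pair. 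Your instinct that the empty-answer hypothesis is what keeps unification from failing is sound, but the m.g.i.\ formulation is also needed to account for the clause-local variables, and that is the missing idea here.
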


Such properties allow computed answers to be captured precisely by the set
of computed answers of maximally general atoms, and generally simplifies
analysis.  When delays are introduced (SLDF resolution), only closure
property 2 holds for successful atoms---a less instantiated version of a
successful atom may flounder rather than succeed.  For floundered atoms
only closure property 1 holds (see proposition \ref{prop_Gth_ca})---an
instance of a floundered atom may succeed, loop, finitely fail or
flounder with an even more instantiated floundered computed answer.
The weaker closure properties (compared to SLD resolution) means it is
harder to precisely characterise the behaviour of SLDF resolution using
sets of atoms.

We now review SLDF resolution, define the set of atoms we use to
approximate its behaviour and show the relationship between the two.
SLDF resolution is similar to SLD resolution (see \cite{Llo84}), but the
computation (atom selection) rule is restricted to be \emph{safe}: an
atom may only be selected if it is in the ``callable atom set''.  It is
desirable that this set is closed under instantiation and the results
below and those in this paper rely on this property.  This property seems
quite intuitive and holds for most logic programming systems with flexible
computation rules.  Another restriction suggested in \cite{naish:ctl:92}
is that all ground atoms should be callable.  While this is not required
for our technical results, it is a pragmatic choice.

SLDF derivations can be failed, successful, infinite or \emph{floundered},
in which the last resolvent consists only of atoms which are not callable
(we say it is \emph{immediately floundered}).  Given the assumption above,
for a program $P$ the following sets of ground atoms can be defined
independently of the (safe, and also fair in the case of finite failure)
computation rule:
\begin{itemize}
\item The success set $SS(P)$ (ground atoms with successful derivations).
\item The finite failure set $FF(P)$ (ground atoms with finitely failed
  SLD trees).
\item The flounder set $FS(P)$ (ground atoms with floundered derivations).
\end{itemize}
Note that some atoms in $FS(P)$ may also be in $SS(P)$ and have infinite
(fair) derivations.  The fact that floundering is independent of the
computation rule suggests it is a declarative property in some sense.
However, it has not been fully exploited for analysis until now, perhaps
due to the lack of non-procedural definitions of $FS(P)$.

Note also that the results above only refer to ground atoms.  An atom
such as \texttt{q(X)} may have floundered derivations but no instance
may appear in $FS$ because no ground instance flounders.  However, $FS$
can contain information about floundering of non-ground atoms and
conjunctions.  For example, if the program contains the definition
\texttt{p :- q(X)}, $FS$ will contain \texttt{p} if and only if
\texttt{q(X)} flounders.  Relying on the existence of such definitions
is a problem unless we know a priori which goals we want to analyse,
and $FS$ gives us no information about substitutions in floundered
derivations.  Substitutions in floundered (sub)computations can
influence termination and are very important for certain programming
styles, particularly those associated with parallel programming.  Most
Prolog systems which support delays print variable bindings at the top
level for both successful and floundered derivations.  Thus we use the
term(s) computed answer (substitution) for both successful and
floundered derivations, explicitly adding the words ``successful'' or
``floundered'' where we feel it aids clarity.

The analysis proposed in this paper can be seen as being based on the
following generalisation of $FS$.

\begin{definition}
The \emph{non-ground flounder set}, $NFS(P)$, of a program $P$ is the set of
program atoms which have floundered derivations with empty floundered
computed answer substitutions.
\end{definition}

Successful derivations can be conservatively approximated by simply
ignoring delays---the lack of closure property 1 for successful atoms and
the fact that an atom may have both successful and floundered derivations
prevents our approach being more precise for analysis of success.
The key results of this section, propositions \ref{prop_Gth_NFS}
and \ref{prop_Gth_ca}, show how $NFS(P)$ contains much but not all
information about computed answers of floundered derivations.

The results in \cite{naish:ctl:92}, and some we prove in this paper, rely
on the notion of two derivations of the same goal using the same clause
selection.  Used in \cite{Llo84} in the context of successful derivations,
we formalise it here.  We assign each clause in the program a unique
positive integer and use zero for the top level goal.  We annotate each
atom used in a derivation with a superscript to indicate the sequence
of clauses and atoms within those clauses used to introduce it.
Annotations are lists of pairs $\langle c, a \rangle$, where $c$ is a
clause number and $a$ is the number of an atom within the clause.  We
use these annotations for both SLD and SLDF resolution.

\begin{definition}
The \emph{annotation} $s_i$ of an atom used in a derivation is as follows.
If goal $A_1^{s_1},A_2^{s_2}, \ldots A_n^{s_n}$ is the top level goal,
$s_i = \langle 0, i \rangle \textit{:nil}$.  Applying substitutions to
atoms does not change their annotation.  If $A_i$ is selected and resolved
with a variant of clause number $j$, $H \leftarrow B_1, B_2, \ldots B_k$,
each $B_m$ atom is annotated with $\langle j, m \rangle \textit{:} s_i$.
Two atoms in different derivations or goals are \emph{corresponding
atoms} if they have the same annotation.  Two derivations with the same
top level goal, or with one top level goal an instance of the other,
have \emph{the same
clause selection} if all pairs of corresponding selected atoms in the
two derivations are matched with the same clause.
\end{definition}

Although not explicitly stated, the proofs in \cite{naish:ctl:92} can
easily be adapted to show that (successful and floundered) computed
answers of successful and floundered derivations are independent of the
computation rule (see Lemma \ref{lem_mgi_Gth}).  For any two successful or
floundered derivations of the same goal with the same clause selection
but a different (safe) computation rule, each selected atom in one
has a corresponding atom selected in the other.  Derivation length,
computed answer substitutions and the last resolvent are all the same,
up to renaming.

\begin{lemma}
\label{lem_mgi_Gth}

Suppose $D$ is the SLD derivation $G, G_1 \alpha_1, G_2 \alpha_2,
\ldots, G_N \alpha_N, \ldots$, where $\alpha_i$ is the composition
of the most general unifiers used in the first $i$ steps, and $D'$ is
the SLD derivation $G \theta, G'_1 \alpha'_1, G'_2 \alpha'_2, \ldots,
G'_N \alpha'_N, \ldots$, where $\theta$ affects only variables in $G$.
Suppose also that $D$ and $D'$ use the same set of program clauses and
corresponding set of selected atoms in the first $N$ steps.  Then $G'_N
\alpha'_N$, is a most general instance (m.g.i.) of $G_N \theta$ and
$G_N \alpha_N$, and $G \theta \alpha'_N$ is a m.g.i of $G \theta$ and
$G \alpha_N$.

\end{lemma}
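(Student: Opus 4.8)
The plan is to argue by induction on the number of steps $N$, in the spirit of the Mgu Lemma and Lifting Lemma of \cite{Llo84}, but tracking the \emph{most general instance} rather than merely an instance relationship. The base case $N=0$ is immediate: $\alpha_0$ and $\alpha'_0$ are empty, $G\theta$ is trivially the m.g.i.\ of $G\theta$ and $G$ (being an instance of the latter), and the same holds with $G_0=G$ in place of $G$. All the work is in the inductive step, where the two derivations perform corresponding resolution steps, on corresponding selected atoms, with the same clause.

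The first observation I would exploit is that, since $D$ and $D'$ use the same clause selection, I may choose identical fresh variants of the clauses in both derivations. These variants then share no variables with $G$, with $\theta$, or with any earlier variant, and the purely syntactic resolvent of $D'$ is obtained from that of $D$ by applying $\theta$: writing $G_N$ for the syntactic resolvent of $D$, the resolvent $G'_N$ of $D'$ equals $G_N\theta$, because $\theta$ touches only the top-level variables, which occur identically in both. Consequently the unification equation added at each step of $D'$ is exactly the corresponding equation of $D$ with $\theta$ applied. If $E$ denotes the set of (selected atom $=$ clause head) equations accumulated over the first $N$ steps of $D$, then the standard fact that stepwise mgu composition computes a global mgu gives (up to renaming) that $\alpha_N$ is a most general unifier of $E$, while $\alpha'_N$ is a most general unifier of $E\theta$.

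The crux of the argument, and the step I expect to be the main obstacle, is then a purely unification-theoretic identity: viewing $\theta$ as the equation set $\{x = x\theta\}$, I must show that $\theta$ composed with a most general unifier of $E\theta$ is a most general unifier of $\theta \cup E$, up to renaming. This is the familiar principle that constraints may be solved incrementally, first imposing $\theta$ and then solving the residual system $E\theta$; but getting the ``most general'' half right requires care, since every substitution $\rho$ satisfying both $\theta$ and $E$ must be shown to factor as $\theta\,\alpha'_N\,\lambda$, which relies precisely on $\alpha_N$ being most general for $E$ and on the freshness of the clause variants. Granting this identity, set $\delta = \theta\,\alpha'_N$. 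Then $\delta$ satisfies $E$, hence is an instance of $\alpha_N$, so $t\delta$ is an instance of $t\alpha_N$ for any $t$; and $\delta$ factors as $\theta$ followed by $\alpha'_N$, so $t\delta$ is an instance of $t\theta$. Maximal generality of $\delta$ among solutions of $\theta \cup E$ then makes $t\delta$ the m.g.i.\ of $t\theta$ and $t\alpha_N$.

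Finally I would read off both conclusions by instantiating $t$. Taking $t = G$ yields $G\theta\alpha'_N = G\delta$ as the m.g.i.\ of $G\theta$ and $G\alpha_N$, which is the second claim. Taking $t = G_N$ and using $G'_N = G_N\theta$ yields $G'_N\alpha'_N = G_N\theta\alpha'_N = G_N\delta$ as the m.g.i.\ of $G_N\theta$ and $G_N\alpha_N$, which is the first. The two statements are thus a single substitution identity applied to the top-level goal and to the current resolvent respectively, and the only genuinely delicate point is the most-general half of the incremental-unification identity.
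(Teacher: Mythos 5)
Your route is essentially the paper's own argument in different clothing. The paper collapses the first $N$ steps into a single unification problem by packing the clause heads, bodies, selected atoms and remaining atoms into two terms $C$ and $H$, so that $C\alpha_N$ is a m.g.i.\ of $C$ and $H$ and $C\alpha'_N$ is a m.g.i.\ of $C\theta$ and $H$; your accumulated equation set $E$ and the three systems $E$, $E\theta$ and $\theta\cup E$ are the same device expressed with unification problems instead of term pairs. The ``main obstacle'' you defer---that $\theta$ composed with a most general unifier of $E\theta$ is most general for $\theta\cup E$---is exactly the step the paper discharges by noting that $C\alpha_N$ is an instance of $H$, so every common instance of $C\theta$ and $C\alpha_N$ is a common instance of $C\theta$ and $H$ and hence an instance of their m.g.i.\ $C\alpha'_N$. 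Your version completes in a few lines (if $\rho$ satisfies the equations $x=x\theta$ then $\rho=\theta\rho$, which therefore unifies $E\theta$, so $\rho=\alpha'_N\lambda$ and hence $\rho=\theta\alpha'_N\lambda$), granted the usual idempotence assumption on $\theta$; note it is the maximal generality of $\alpha'_N$ for $E\theta$, not of $\alpha_N$ for $E$, that this uses. So the deferred identity is not a real gap.

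The step to scrutinise instead is your last one: ``maximal generality of $\delta$ among solutions of $\theta\cup E$ makes $t\delta$ the m.g.i.\ of $t\theta$ and $t\alpha_N$.'' For an arbitrary $t$ this inference is not valid, because maximality of $\delta$ only bounds those common instances of $t\theta$ and $t\alpha_N$ that arise by applying a solution of $\theta\cup E$ to $t$, and two terms can have common instances arising in no such way: the m.g.i.\ of $f(a,x)$ and $f(y,y)$ is $f(a,a)$, yet the second components $x$ and $y$ have a bare variable as their m.g.i. The paper's closing sentence (``the instances \ldots\ can be extracted from the arguments \ldots\ so the result follows'') performs the same unjustified componentwise specialization, so you have inherited the paper's one delicate point rather than introduced a new one; in the places the lemma is actually applied, $\theta$ is a restriction of the computed answer substitution, $G_N\alpha_N$ is an instance of $G_N\theta$, and only the instance relations that do survive extraction are needed. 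In short: same decomposition as the paper, the crux correctly identified and completable, and the only genuine weakness is one the paper's own proof shares.
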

\begin{proof}

Let $H_i \leftarrow B_i$, $1 \leq i \leq N$, be the $i^{th}$ program
clause variant used in the first $N$ steps of $D$, $C_i$, $1 \leq i
\leq N$, be the atom in a body of one of these clauses or in $G$ whose
corresponding instance is selected and matched with $H_i$, and $R_i$,
$1 \leq i \leq K$ be the atoms as above corresponding to those in $G_N$
(they are not selected in the first $N$ steps).  Let terms $C$ and $H$
be as follows, where the connectives and predicate symbols are mapped
to function symbols.
\begin{eqnarray}
\nonumber
C & = & f(H_1 , B_1, H_2 , B_2, \ldots , H_N , B_N,
C_1, C_2, \ldots , C_N, R_1, R_2, \ldots , R_K) \\
\nonumber
H & = & f(H_1 , B_1, H_2 , B_2, \ldots , H_N , B_N,
H_1, H_2, \ldots , H_N, R_1, R_2, \ldots , R_K)
\end{eqnarray}

A m.g.i. of $C$ and $H$ can be obtained by left to right unification of
the arguments of $C$ and a variant of $H$ which shares no variables with
$C$.  The first $2N$ argument unifications yield a renaming substitution,
resulting in the same variants of program clause heads and bodies in the
instances of $H$ and $C$.  The next $N$ are the same as the unifications
in $D$, modulo the clause variants used, and the other unifications
yield empty unifiers.  Thus $C \alpha_N$ is a m.g.i. of $C$ and $H$.
Other orders of the arguments $C_i$ and $H_i$ (or unification orders)
correspond to different computation rules but result is the same most
general instance, or a variant.  So, by the same construction, $C
\alpha'_N$ is a m.g.i. of $H$ and $C \theta$, and must be an instance
of $C \alpha_N$.  $C \alpha_N$ is an instance of $H$ so $C \alpha'_N$
must be a m.g.i. of $C \theta$ and $C \alpha_N$.  The instances of the
initial goals and $N^{th}$ resolvents can be extracted from the arguments
of $C \alpha_N$, $C \theta$ and $C \alpha'_N$ so the result follows.
\end{proof}

We can now show something similar to the converse of closure property
1, for floundering.  This allows us to infer certain information about
program behaviour from $NFS(P)$.

\begin{proposition}
\label{prop_Gth_NFS}
If $D$ is the floundered SLDF derivation $G, G_1 \alpha_1, G_2 \alpha_2,
\ldots, G_N \alpha_N$ with floundered computed answer $G \theta$, then $G
\theta$ has a floundered SLDF derivation $D'$ with a renaming (or empty)
floundered computed answer substitution ($G \theta \in NFS(P)$).
\end{proposition}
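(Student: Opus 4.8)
The plan is to construct the required derivation $D'$ by running $G\theta$ through exactly the same clause selection as $D$, and then to use Lemma~\ref{lem_mgi_Gth} to control how the resolvents of $D'$ relate to those of $D$. Since $G\theta$ is the floundered computed answer of $D$, I may take $\theta = \alpha_N$ restricted to the variables of $G$, so that $G\theta = G\alpha_N$; this identity is what will ultimately force the computed answer of $D'$ to be trivial. First I would invoke the construction in the proof of Lemma~\ref{lem_mgi_Gth}: because $G\theta$ is an instance of $G$ and the selected atoms of $D$ unify with the chosen clause heads, the corresponding unifications along $G\theta$ succeed as well, so a derivation $D'$ of $G\theta$ with the same clause selection exists and each of its resolvents $G'_i \alpha'_i$ is an m.g.i.\ of $G_i \theta$ and $G_i \alpha_i$, hence in particular an \emph{instance} of the corresponding resolvent $G_i \alpha_i$ of $D$.

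From this, two of the three obligations follow quickly. For \emph{safety} of $D'$: at each step the atom selected in $D'$ is an instance of the (callable) atom selected at the corresponding step of $D$, so by closure of the callable atom set under instantiation it is itself callable, and $D'$ is a legitimate SLDF derivation. For the \emph{computed answer}: the other conclusion of Lemma~\ref{lem_mgi_Gth} gives that $G\theta \alpha'_N$ is an m.g.i.\ of $G\theta$ and $G\alpha_N$; but $G\alpha_N = G\theta$, so this m.g.i.\ is just a variant of $G\theta$, whence the computed answer substitution of $D'$ restricted to the variables of $G\theta$ is a renaming (or empty), exactly as required for $G\theta \in NFS(P)$.

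The remaining, and hardest, obligation is to show that $D'$ actually \emph{flounders}, i.e.\ that its last resolvent $G'_N \alpha'_N$ is immediately floundered. Here the fact that $G'_N \alpha'_N$ is merely an instance of the immediately floundered resolvent $G_N \alpha_N$ of $D$ is \emph{not} enough, since an instance of a non-callable atom may well be callable. The crux is therefore to upgrade ``instance'' to ``variant'': I would prove the sub-claim that $G_N \alpha_N$ is itself an instance of $G_N \theta$. This uses that $\theta$ and $\alpha_N$ act identically on the variables of $G$, while the additional instantiation performed by $\alpha_N$ binds only the fresh variables introduced by the clause variants; idempotency of $\alpha_N$ ensures that the terms $\theta$ assigns to $G$'s variables are not further instantiated, so $G_N \alpha_N = (G_N \theta)\sigma$ for the restriction $\sigma$ of $\alpha_N$ to those fresh variables. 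Given this, $G_N \alpha_N$ is a common instance of $G_N \theta$ and $G_N \alpha_N$ that every other common instance instantiates, so the m.g.i.\ $G'_N \alpha'_N$ is a variant of $G_N \alpha_N$. Since renaming preserves non-callability of every atom, $G'_N \alpha'_N$ is immediately floundered, $D'$ is a floundered derivation, and the proposition follows. I expect the bookkeeping in this variant argument---keeping the $G$-variables and the fresh clause-variant variables separate and justifying the appeal to idempotency---to be the main obstacle.
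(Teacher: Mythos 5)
Your proof is correct and takes essentially the same route as the paper's: run $G\theta$ through the same clause and atom selection, use Lemma~\ref{lem_mgi_Gth} to see that each resolvent of $D'$ is an instance of the corresponding (hence callable-where-needed) resolvent of $D$, and then observe that because $G_N\alpha_N$ is an instance of $G_N\theta$, the m.g.i.\ $G'_N\alpha'_N$ collapses to a variant of the immediately floundered $G_N\alpha_N$, while $G\theta\alpha'_N$ is a variant of $G\theta$. Your explicit idempotency argument for the sub-claim that $G_N\alpha_N$ is an instance of $G_N\theta$ merely spells out a step the paper states without proof.
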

\begin{proof}

Let $D'$ be a derivation using the same selection and computation rule
as $D$.  $D'$ cannot flounder before $N$ steps because the $i^{th}$
resolvent is an instance of $G_i \alpha_i$ and the callable atom set is
closed under instantiation.  $D'$ cannot fail before $N$ steps because
the $i^{th}$ resolvent is no more instantiated than $G_i \alpha_N$,
and $\alpha_N$ is a unifier of all pairs of calls and clause heads in
the first $N$ steps.  Consider the $N^{th}$ resolvent, $G'_N \alpha'_N$.
By Lemma \ref{lem_mgi_Gth}, $G'_N \alpha'_N$ is a m.g.i. of $G_N \theta$
and $G_N \alpha_N$ and since $G_N \alpha_N$ is an instance of $G_N
\theta$, $G'_N \alpha'_N$ must be a variant of $G_N \alpha_N$, so it is
immediately floundered.  Similarly, $G \theta \alpha'_N$ is a variant of
$G \theta$, so the floundered computed answer substitution is a renaming.
\end{proof}

\begin{lemma}
\label{lem_Fi_inst}
If $G \theta$ has a floundered SLDF derivation $D'$ with the last
resolvent being $F'_1, F'_2, \ldots, F'_k$ and $G$ has a SLDF derivation
$D$ using the same clause selection rule then each $F'_i$ is an instance
of all its corresponding atoms in $D$.
\end{lemma}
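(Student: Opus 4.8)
The plan is to reduce the statement to a single application of Lemma~\ref{lem_mgi_Gth}, comparing the two derivations resolvent by resolvent. Let $N$ be the length of the floundered derivation $D'$, so its last resolvent is $G'_N \alpha'_N = F'_1, \ldots, F'_k$. Since $D$ is a derivation of $G$ using the same clause selection as $D'$, it selects corresponding atoms matched with the same clauses, and hence completes the same $N$ steps that $D'$ does; write its $N$th resolvent as $G_N \alpha_N$. Because $D$ and $D'$ share the same clause selection, the same sequence of annotations is generated in both, so the atoms of $G'_N \alpha'_N$ and of $G_N \alpha_N$ are matched up by their annotations.

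First I would invoke Lemma~\ref{lem_mgi_Gth} with $G$ and $G \theta$ as the two initial goals. It yields that $G'_N \alpha'_N$ is a most general instance of $G_N \theta$ and $G_N \alpha_N$, and in particular an instance of $G_N \alpha_N$. Read off annotation by annotation, this says each $F'_i$ is an instance of the atom of $G_N \alpha_N$ carrying the same annotation, that is, of its corresponding atom in the $N$th resolvent of $D$. It then remains to pass from this one corresponding atom to \emph{all} the corresponding atoms of $F'_i$ in $D$.

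For this I would use two observations. Since $F'_i$ lies in the last resolvent of $D'$ it is never selected, so (by the matching of clause selections) the atom carrying its annotation is never selected in the first $N$ steps of $D$ either; identified by its annotation, that atom therefore persists through the resolvents of $D$ from its introduction up to step $N$, each later appearance being an instance of the earlier ones because substitutions only instantiate. As the step-$N$ appearance is the most instantiated of these and $F'_i$ is already an instance of it, $F'_i$ is an instance of every appearance. To rule out any further appearances beyond step $N$ I would appeal to closure of the callable atom set under instantiation in its contrapositive form: each $F'_i$ is not callable (it lies in an immediately floundered resolvent), so its less instantiated corresponding atom in $G_N \alpha_N$ is not callable either, whence $D$ is itself immediately floundered at step $N$ and has no later resolvents. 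Thus the corresponding atoms of $F'_i$ in $D$ are exactly its appearances in the first $N$ resolvents, and $F'_i$ is an instance of all of them.

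The main obstacle is bookkeeping rather than conceptual: one must check that the annotation machinery genuinely identifies corresponding atoms across $D$ and $D'$, and that the instance direction supplied by Lemma~\ref{lem_mgi_Gth} is the one needed (the instance $G \theta$ produces the more instantiated resolvents). The one genuinely delicate point is the contrapositive use of closure under instantiation to show that $D$ cannot proceed past the step at which $D'$ flounders, since this is what makes the quantifier over \emph{all} corresponding atoms safe.
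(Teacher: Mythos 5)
There is a genuine gap at the very first step. From ``$D$ uses the same clause selection as $D'$'' you infer that $D$ ``completes the same $N$ steps that $D'$ does'', with an $N$th resolvent $G_N \alpha_N$ whose atoms correspond to those of $G'_N \alpha'_N$. But the paper's definition of \emph{same clause selection} only constrains which clause is used when two \emph{corresponding} atoms happen to be selected in both derivations; it says nothing about which atoms are selected, in what order, or how many steps each derivation takes. Since $G$ is more general than $G \theta$ and the callable atom set is closed under instantiation, the resolvents of $D$ are \emph{less} instantiated than their counterparts in $D'$, so an atom that is callable (and selected) in $D'$ may be non-callable in $D$: the derivation $D$ can flounder strictly before step $N$, or select a different set of atoms. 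For instance, with \texttt{:- delay q(X) if var(X)}, \texttt{:- delay r(Y) if var(Y)} and the single clause \texttt{q(a) :- r(Z)}, take $G$ to be \texttt{q(X)} and $\theta$ binding \texttt{X} to \texttt{a}: then $D'$ performs one step and flounders on \texttt{r(Z)}, while $D$ is immediately floundered at length $0$ and never introduces the atom corresponding to \texttt{r(Z)}. In such cases Lemma~\ref{lem_mgi_Gth} cannot be invoked with $N$ steps on both sides, and that invocation is what your entire argument rests on. (The lemma's conclusion still holds there, vacuously, but your proof does not establish it.)

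The remainder of your argument --- persistence of unselected atoms through successive resolvents, and the contrapositive use of closure under instantiation to show $D$ cannot proceed past the point where $D'$ flounders --- is sound \emph{conditional} on that first step. The paper closes the gap you leave open by inducting on the length of $D'$: at each stage it takes whatever atom $A$ the derivation $D$ actually selects first, notes that $A \theta$ is callable by closure under instantiation (and hence must be selected somewhere in $D'$, since otherwise it would sit, non-callable, in the floundered last resolvent), reorders $D'$ by a switching argument so that $A \theta$ is selected first, and then applies the induction hypothesis to the two first resolvents. That reordering step is precisely the device needed to reconcile the two derivations' possibly different selection orders and selected-atom sets; your direct appeal to Lemma~\ref{lem_mgi_Gth} skips it, and without it the claimed correspondence between the first $N$ steps of $D$ and $D'$ is unjustified.
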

\begin{proof}

If $G$ is immediately floundered the result is trivial.  We use induction
on the length of $D'$.  For length 0, since the callable atom set is
closed under instantiation and $G \theta$ is immediately floundered, $G$
must also be immediately floundered.  Assume it is true for length $N$.
Suppose the first selected atom in $D$ is $A$.  $A \theta$ is also
callable, so we can construct a derivation $D''$ using the same clause
selection as $D'$ but with $A \theta$ as the first selected atom.
The lengths of $D''$ and $D'$ are equal and their last resolvents are
variants due to the result stated earlier.
The first resolvent in $D''$
(after selecting $A \theta$) is an instance of the first resolvent in $D$
and has a derivation of length $N$ so the result follows.
\end{proof}

We can now show that closure property 1 holds for floundering:

\begin{proposition}
\label{prop_Gth_ca}
If $G \theta$ has a floundered SLDF derivation $D'$ with a renaming
(or empty) floundered computed answer substitution ($G \theta \in
NFS(P)$), then $G$ has
a floundered SLDF derivation $D$ with a floundered computed answer
with $G \theta$ as an instance.
\end{proposition}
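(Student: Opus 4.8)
The plan is to \emph{construct} a floundered derivation $D$ of $G$ directly, driven from the (more general) goal $G$ rather than by replaying $D'$. The difficulty to keep in mind throughout is that the callable atom set is closed only under instantiation, \emph{not} under generalisation: each atom selected in $D'$ is an instance of its counterpart in a $G$-derivation, so that counterpart need not be callable, and we cannot simply copy the computation rule of $D'$. The remedy is to use closure only in the safe direction. At each step I select \emph{any} callable atom $A$ in the current resolvent of $D$. Its counterpart $A'$ in $D'$ (same annotation) is an instance of $A$, hence callable by closure under instantiation; being callable, $A'$ is selected somewhere in $D'$ and is there matched with a definite clause, and I resolve $A$ with that same clause, which succeeds because $A$ is more general than the atom $A'$ that already unified with the clause head. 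Carried out inductively, this maintains, via Lemma~\ref{lem_mgi_Gth}, a reordering of $D'$ whose first steps agree with $D$ and whose resolvents are instances of the corresponding resolvents of $D$; in particular every atom occurring in $D$ has a same-annotation, more instantiated counterpart in $D'$, and the two derivations stay on the same clause selection in the sense of the definition.

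First I would check that $D$ is well defined and finite. Distinct selected atoms of $D$ have distinct annotations and hence inject into the selected atoms of $D'$, so $D$ performs at most $N$ steps and terminates; and it cannot fail, since each atom it selects is more general than the atom $D'$ successfully resolved. Let $M \le N$ be the length of $D$ and let $G \alpha_M$ be its last goal.

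Next I would show that $D$ flounders. By construction $D$ halts only when no atom of its last resolvent is callable, so it remains only to exclude success, i.e.\ an empty last resolvent. Using computation-rule independence I take the reordering of $D'$ whose first $M$ steps use exactly the selections of $D$ and apply Lemma~\ref{lem_mgi_Gth} to these $M$ steps: the $M^{th}$ resolvent of the reordered $D'$ is a most general instance of the $M^{th}$ resolvent of $D$. Were the latter empty, the former would be empty too, making $D'$ a \emph{successful} derivation and contradicting the hypothesis that $D'$ flounders. Hence the last resolvent of $D$ is non-empty and, all its atoms being non-callable, $D$ is immediately floundered (consistently, Lemma~\ref{lem_Fi_inst} says each atom of the floundered resolvent of $D'$ is an instance of its counterpart in $D$).

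Finally I would pin down the computed answer. Applying Lemma~\ref{lem_mgi_Gth} to the same first $M$ steps, $G\theta\alpha'_M$ is a most general instance of $G\theta$ and $G\alpha_M$, where $\alpha'_M$ is the substitution of the reordered $D'$ after $M$ steps. Because the floundered computed answer substitution of $D'$ is a renaming ($G\theta \in NFS(P)$), its prefix $\alpha'_M$ is a renaming on the variables of $G\theta$, so this most general instance is a variant of $G\theta$; consequently $G\theta$ is an instance of $G\alpha_M$. Thus $D$ is a floundered derivation of $G$ whose floundered computed answer has $G\theta$ as an instance, which is exactly closure property~1 for floundering. I expect the main obstacle to be precisely the asymmetry of closure under instantiation flagged at the outset: every part of the argument must be organised so that closure is only ever used to pass from a more general callable atom to its callable instance, never the reverse, and it is this constraint that forces $D$ to be built from $G$'s side and to be, in general, strictly shorter than $D'$.
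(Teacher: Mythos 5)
Your proposal is correct and follows essentially the same route as the paper's proof: construct $D$ from $G$ using the clause selection of $D'$, observe that selected atoms of $D$ inject into selected atoms of $D'$ (via closure of the callable set under instantiation) so $D$ is finite and cannot succeed, note it cannot fail because each selected atom is more general than its successfully resolved counterpart, and extract the computed-answer relation from Lemma~\ref{lem_mgi_Gth}. The only cosmetic difference is that you rule out success by a reordering argument through Lemma~\ref{lem_mgi_Gth} where the paper invokes Lemma~\ref{lem_Fi_inst} directly; both are sound and you cite the latter parenthetically anyway.
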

\begin{proof}

From $G$ we can construct a derivation $D$ using the same clause selection
as that used in $D'$ and any safe computation rule.  The callable atom
set is closed under instantiation so by Lemma \ref{lem_Fi_inst}, any atom
selected in $D$ must have a corresponding atom selected in $D'$ and thus
$D$ cannot be successful or longer that $D'$.  $D$ uses the variants of
the same clauses used in $D'$, which has a more (or equally) instantiated
top level goal, $G \theta$, so $D$ cannot be failed.  It must therefore
be floundered and have a computed answer with $G \theta$ as an instance.
\end{proof}

From these propositions we know that an atom $A$ will flounder if and
only if it has an instance in $NFS(P)$.  Also, the maximally general
instances of $A$ in $NFS(P)$ will be floundered computed answers.
The imprecision of $NFS(P)$ with respect to floundered computed answers
is apparent when there are atoms in $NFS(P)$ which are instances of
other atoms in $NFS(P)$.  If $NFS(P) = \{ p(f(X)), p(f(f(f(X)))) \}$ for
example, we know $p(Y)$ will have the first atom as a floundered computed
answer. The second atom may also be a floundered computed answer (via
a different floundered derivation) or it may only be returned for more
instantiated goal such as $p(f(f(X)))$.  In practice, there is usually a
single maximally general instance of a goal in $NFS(P)$ and this is the
only answer computed, even when there are an infinite number of instances.
For example, the non-ground flounder set for \texttt{append} has an
infinite number of instances of the atom \texttt{append(A, [B], C)},
including \texttt{append(Xs, [Y], Zs)}, \texttt{append([X1|Xs], [Y],
[X1|Zs])} and \texttt{append([X1, X2|Xs], [42], [X1, X2|Zs])}, but only
the first is computed.

\section{Converting floundering into success}
\label{sec:elim}

We now present a program transformation which converts a program $P$,
with delays, into a program $P'$, without delays.  The success set of
$P'$ is the union of the success set of $P$ and a set isomorphic to the
non-ground flounder set of $P$.  Thus analysis of some properties of
programs with delays can be reduced to analysis of programs without
delays.

\subsection{The $SF()$ transformation}

Type, groundness and other dependencies are of interest in programs
with and without delays as they give us important information
concerning correctness.  In the version of naive reverse without
delays, analysis can tell us that in all computed answers of
\texttt{reverse/2} both arguments are lists.  In the delaying
version (Figure \ref{fig_rev}) this is not the case, since
there are floundered computed answers where both arguments are
variables.  This increases the flexibility of \texttt{reverse/2}
since it can delay rather than computing an infinite number of
answers (this is particularly important when \texttt{reverse/2}
is called as part of a larger computation).

In (successful and floundered) computed answers for the delaying version
of \texttt{reverse/2}, the first argument of is a list if and only if
the second argument is a list.  This tells us that if either argument
is a list in a call, the other argument will be instantiated to a list
by the \texttt{reverse/2} computation (assuming it terminates).  If the
delay declaration for \texttt{append/3} was changed so it delayed if
just the first argument was a variable, \texttt{reverse/2} would not
work backwards.  It would flounder rather than instantiate the first
argument to a list and the ``if'' part of this dependency would not
hold.  This section shows how a program with delays can be very simply
transformed into a program without delays which can be analysed to reveal
information such as this.

Analysis of success in a program without delays cannot give us
information about (non-ground) delayed calls directly because
success is closed under instantiation (closure property 2)
whereas floundering is not.  However, extraneous function
symbols allow us to \emph{encode} non-ground atoms using ground
atoms, re-establishing this proposition and allowing analysis.
The encoding uses an isomorphism between the (infinite) set of
variables and the set of terms with extraneous principal function
symbols (this set must also be infinite to avoid loss of precision
in the encoding; it is sufficient to have a single extraneous
function symbol with arity greater than zero).

\begin{definition}
The \emph{encoded flounder set ($EFS(P)$)} of a program $P$ is the
set of ground instances of atoms in $NFS(P)$ such that distinct
variables are replaced by distinct terms with extraneous principal
function symbols.  $NFS(P)$ can be reconstructed from the atoms
in $EFS(P)$ by finding the set of most specific generalisations
which contain only program function symbols.
\end{definition}

For example, the non-ground flounder set for append contains atoms such
as \texttt{append(Xs, [Y], Zs)} whereas the encoded flounder set contains
atoms such as \texttt{append($\bowtie$, [$\otimes$(1)], $\otimes$(2))},
assuming
$\bowtie$ and $\otimes$ are extraneous function symbols.

We introduce two new ``builtin'' predicates, \texttt{evar/1}
and \texttt{enonground/1}, which are true if their argument is an
\emph{encoded} variable or non-ground term, respectively.  For simplicity,
our treatment assumes they are defined using an (infinite) set of facts:
\texttt{evar($T$)} for all terms $T$ where the principal function
symbol is not a program function symbol and \texttt{enonground($T$)}
for all terms $T$ which have at least one extraneous function symbol.
This can cause an infinite branching factor in SLD trees (for example,
a call such as \texttt{evar(X)}).  However, since in this paper we deal
with single derivations but not SLD trees (or finite failure), it causes
us no difficulties.

\begin{figure}
\begin{verbatim}
evar('VAR'(_)).

enonground(A) :- evar(A).
enonground([A|B]) :- enonground(A).
enonground([A|B]) :- enonground(B).
\end{verbatim}
\caption{Possible Prolog definitions of \texttt{evar} and
\texttt{enonground}}
\label{fig_evar}
\end{figure}

It is also possible to define \texttt{evar/1} and \texttt{enonground/1}
in Prolog.  Figure \thefigure\ gives a definition which assumes
\texttt{'VAR'/1} is the only extraneous function symbol of the original
program and \texttt{'.'/2} is the only program function symbol with arity
greater than zero for the original program (if there are other such
function symbols, more clauses are needed for \texttt{enonground/1}).
These definitions depart from our theoretical treatment in that they can
involve deeper proof trees (due to recursive calls) and they can have
non-ground computed answers.  However, they can be useful for observing
floundering behaviour, especially with a fair (or depth-bounded)
search strategy---see section \ref{sec_using_f}.

We now define the $SF()$ transformation:

\begin{definition}
Given a program $P$ (not defining predicates \texttt{evar/1} or
\texttt{enonground/1}) containing delay declarations, $SF(P)$ is
the program with all clauses of $P$ plus, for each delay declaration
\texttt{:- delay A if C} in $P$, the clause \texttt{A :- C'} where
\texttt{C'} is \texttt{C} with \texttt{var} replaced by \texttt{evar}
and \texttt{nonground} replaced by \texttt{enonground}.
These additional clauses introduced for delay declarations and those in the
definitions of \texttt{evar/1} and \texttt{enonground/1} are referred
to as \emph{delay clauses}.
\end{definition}

\begin{figure}
\begin{verbatim}
append_sf(As, Bs, Cs) :- evar(As), evar(Cs).
append_sf([], As, As).
append_sf(A.As, Bs, A.Cs) :- append_sf(As, Bs, Cs).

reverse_sf(As, Bs) :- evar(As), evar(Bs).
reverse_sf([], []).
reverse_sf(A.As, Bs) :- append_sf(Cs, [A], Bs), reverse_sf(As, Cs).
\end{verbatim}
\caption{Computing the success plus flounder set for \texttt{reverse}}
\end{figure}

To avoid possible confusion, the code in this paper uses ``\verb@_sf@''
suffixes for the new predicate definitions; our theoretical treatment
assumes the original predicate names are used for the new predicate
definitions.  For example, Figure \thefigure\ shows the transformed
version of \texttt{reverse} (from Figure \ref{fig_rev}).  Figures
\ref{fig_extra_d} and \ref{fig_extra_du} give further examples.

Immediately floundered atoms in $NFS(P)$ have matching delay declarations
with true right hand sides.  Corresponding (encoded) atoms in $EFS(P)$ have
matching ground delay clause instances with successful bodies.  We have
described how \texttt{evar} and \texttt{enonground} behave.  Some languages
have delay conditions which cannot be expressed using \emph{var} and
\emph{nonground}.  For example, in NU-Prolog \verb@X ~= Y@ delays whereas
\verb@X ~= X@ does not.  To analyse such constructs we need additional
primitives similar to \texttt{evar} and \texttt{enonground}.  The key to
designing such constructs is that the delay clauses should implement
the encoding as defined above.

\subsection{Properties of $SF()$}

The following propositions show how successful derivations in $SF(P)$
correspond to successful or floundered derivations in $P$: the success
set of $SF(P)$ is the union of the success set of $P$ without delays
and the encoded flounder set of $P$ (Proposition \ref{prop_ss_sf_p}).
Note that when we talk of successful derivations and/or $SS(P)$ here, SLD
resolution rather than SLDF resolution is used (delays are ignored when
dealing with success).  The lack of closure property 2 is problematic when
dealing with success if delays are considered and SLDF resolution used.

\begin{proposition}
\label{prop_no_del_iff}
A goal $G$ has a successful SLD derivation $D$ with program $P$
(ignoring delays) if and only if it
has a successful derivation $D$ with $SF(P)$ which uses no delay clauses.
\end{proposition}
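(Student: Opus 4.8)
The plan is to exploit the fact that $SF(P)$ is, by construction, the disjoint union of two kinds of clauses: the original clauses of $P$ (which are exactly the \emph{non-delay} clauses) and the \emph{delay} clauses, namely the new clause \texttt{A :- C'} introduced for each delay declaration together with the defining clauses of \texttt{evar/1} and \texttt{enonground/1}. Since $P$ is assumed to define neither \texttt{evar/1} nor \texttt{enonground/1}, and each introduced delay clause either defines one of these predicates or has a body built solely from \texttt{evar} and \texttt{enonground} literals, the set of clauses of $P$ and the set of delay clauses are disjoint. Hence the non-delay clauses of $SF(P)$ are precisely the clauses of $P$, and a derivation ``uses no delay clauses'' exactly when every resolution step selects a clause of $P$.

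For the forward direction I would take a successful SLD derivation $D$ of $G$ with $P$ (ignoring delays). Every resolution step of $D$ matches the selected atom with a variant of some clause of $P$. As each clause of $P$ occurs unchanged in $SF(P)$ and is a non-delay clause, the identical sequence of selected atoms and clause matches is a legal derivation with $SF(P)$; it is successful (same final empty resolvent) and uses no delay clauses. Because delays are ignored when dealing with success in both programs, no question of safety or the callable atom set arises.

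For the converse I would take a successful derivation $D$ of $G$ with $SF(P)$ that uses no delay clauses. Since the non-delay clauses of $SF(P)$ are exactly the clauses of $P$, every step of $D$ resolves against a clause of $P$, so the very same $D$ is a successful SLD derivation of $G$ with $P$ (again ignoring delays). The correspondence in both directions is the identity on the underlying sequence of clause matches, so derivation length, computed answer and final resolvent coincide automatically.

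The main (and essentially only) obstacle is bookkeeping rather than mathematical depth: I must verify that the partition of $SF(P)$ into program clauses and delay clauses is genuine, i.e.\ that no clause of $P$ is inadvertently a delay clause. This is secured by the hypothesis that $P$ defines neither \texttt{evar/1} nor \texttt{enonground/1}, so the two clause sets are disjoint and the equivalence follows as a direct syntactic correspondence between the two derivations.
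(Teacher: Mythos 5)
Your proof is correct and takes essentially the same approach as the paper, whose entire argument is the one-line observation that $SF(P)$ without delay clauses is the same as $P$ without delays; you simply spell out the syntactic correspondence and the disjointness of the two clause sets in more detail.
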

\begin{proof}
  $SF(P)$ without delay clauses is the same as $P$ without delays.
\end{proof}

We now deal with floundering, which is more complex.

\begin{lemma}
\label{lem_im_fl}
  A goal $G$ is immediately floundered with program $P$
  if and only if it has a successful derivation $D$ with $SF(P)$
  which uses only delay clauses.
\end{lemma}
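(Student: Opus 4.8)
The plan is to prove the two directions of the equivalence separately, with the technical heart being a correspondence between the delay primitives \texttt{var}/\texttt{nonground} and their encoded counterparts \texttt{evar}/\texttt{enonground}. Concretely, I first establish a sub-lemma: for any term $t$ occurring as an argument in a goal or clause body of $SF(P)$, the call \texttt{evar($t$)} has a successful derivation using only delay clauses if and only if $t$ is a variable (or, in the encoded setting, has an extraneous principal function symbol), and \texttt{enonground($t$)} has such a derivation if and only if $t$ is nonground (or contains an extraneous function symbol). This follows directly from the definitions of \texttt{evar}/\texttt{enonground} as delay clauses together with the disjointness of program and extraneous function symbols: a program term unifies with a term whose principal function symbol is extraneous exactly when it is a variable, and with a term merely containing an extraneous function symbol exactly when it is nonground. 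Lifting this through the structure of the condition --- \texttt{evar}/\texttt{enonground} atoms combined with ``\texttt{,}'' (conjunction) and ``\texttt{;}'' (disjunction) --- yields that, for each delay declaration \texttt{:- delay A if C} and its delay clause \texttt{A :- C'} in $SF(P)$, the body $C'$ has a successful derivation on the arguments of an atom using only delay clauses if and only if the original condition $C$ holds for that atom, i.e.\ if and only if the atom is non-callable.

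For the forward direction, suppose $G = A_1, \ldots, A_m$ is immediately floundered with $P$. Then every $A_i$ is non-callable, so its predicate $p_i/N_i$ carries a delay declaration whose condition $C_i$ holds for $A_i$. I construct the required derivation for $SF(P)$ by, for each $A_i$ in turn, resolving with the delay clause $p_i(V_1,\ldots,V_{N_i})$~\texttt{:- }$C_i'$: head unification merely binds the $V_j$ to the arguments of $A_i$, and then by the sub-lemma the instance of $C_i'$ has a successful derivation using only the \texttt{evar}/\texttt{enonground} delay clauses, choosing the disjunct of $C_i'$ corresponding to a true disjunct of $C_i$. Since every clause used (the introduced delay clause for $p_i/N_i$ and the \texttt{evar}/\texttt{enonground} clauses) is a delay clause, the resulting successful derivation of $G$ uses only delay clauses.

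For the converse, suppose $G$ has a successful derivation $D$ with $SF(P)$ using only delay clauses. Because $P$ does not define \texttt{evar}/\texttt{enonground}, each atom of $G$ is a program atom whose predicate is distinct from \texttt{evar} and \texttt{enonground}; hence the only delay clauses it can resolve with are those introduced for delay declarations. Consequently each $A_i$ must match a delay clause $p_i(V_1,\ldots,V_{N_i})$~\texttt{:- }$C_i'$ (so $p_i/N_i$ indeed has a delay declaration), and the corresponding instance of $C_i'$ must itself succeed within $D$ using only delay clauses. By the sub-lemma this forces $C_i$ to hold for $A_i$, so each $A_i$ is non-callable and therefore $G$ is immediately floundered.

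I expect the main obstacle to be the \texttt{enonground} half of the sub-lemma, where ``succeeds'' is an existential statement over instantiations: \texttt{enonground($t$)} succeeds precisely when $t$ unifies with some term containing an extraneous function symbol, which must be shown equivalent to $t$ being nonground rather than merely non-variable. Care is also needed to track the ``only delay clauses'' condition through the disjunctive structure of $C'$ and through head unification with $p_i(V_1,\ldots,V_{N_i})$, ensuring that binding the head variables never forces a program clause to be used and that the chosen disjunct of $C'$ can always be aligned with a true disjunct of $C$.
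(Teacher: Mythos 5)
Your proposal is correct and takes essentially the same approach as the paper, whose proof of this lemma is a single sentence (``follows from the way in which delay clauses implement the encoding of the flounder set''); you have simply filled in the details that the paper leaves implicit, namely the correspondence between \texttt{var}/\texttt{nonground} and \texttt{evar}/\texttt{enonground} lifted through the conjunctive/disjunctive structure of the delay condition. Your attention to the fact that arguments may already contain extraneous function symbols introduced by earlier delay-clause resolutions is a worthwhile detail the paper glosses over entirely.
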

\begin{proof}
  Follows from the way in which delay clauses implement the encoding
  of the flounder set.
\end{proof}

\begin{lemma}
\label{lem_subs_efs}

A goal $G$ which is immediately floundered with program $P$ has a
computed answer substitution $\theta$ in SF(P) such that all variables
bound by $\theta$ are bound to distinct terms with extraneous principal
function symbols (or are simply renamed).
\end{lemma}
\begin{proof}

By Lemma \ref{lem_im_fl}, there is a derivation where all non-renaming
substitutions are due to calls to \texttt{evar/1} and \texttt{enonground/1}.
A call to \texttt{evar/1} binds its argument to a term with an extraneous
principal function symbol.  Multiple calls with distinct variables
will have some of the infinite number of computed answers binding
their arguments to distinct terms.  Similarly, some computed answers to
\texttt{enonground/1} will bind all distinct variables in its argument
to distinct terms with extraneous principal function symbols.
\end{proof}

\begin{lemma}
\label{lem_nfs_iff_sf_del}
  Given a program $P$, a goal $G$ has a floundered derivation $D$ with an
  empty floundered computed answer substitution if and only
  if it has a successful derivation $D'$ with $SF(P)$ in which delay clauses
  are selected and the successful computed answer, $G \theta$, is such that 
  all variables bound by $\theta$ are bound to distinct terms with
  extraneous principal function symbols (or are simply renamed).
\end{lemma}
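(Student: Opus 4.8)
The plan is to prove both directions by decomposing derivations into a \emph{program-clause part} and a \emph{delay-clause part}. The structural fact that makes this clean is that in $SF(P)$ the delay clauses (including the definitions of \texttt{evar/1} and \texttt{enonground/1}) call only delay predicates, so the sub-derivation below any atom resolved with a delay clause consists entirely of delay-clause steps. A floundered SLDF derivation of $G$ splits analogously: resolution steps on callable atoms using clauses of $P$, terminating in an immediately floundered resolvent $R = F_1,\dots,F_k$ whose atoms are exactly the non-callable ones.

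For the forward direction I would start from a floundered SLDF derivation $D$ of $G$ with empty (renaming) floundered computed answer. Its resolution steps use only clauses of $P$, all present in $SF(P)$, so they replay verbatim in $SF(P)$ to reach a variant of $R$; since the floundered computed answer is a renaming, each variable of $G$ either survives into $R$ or is merely renamed. As $R$ is immediately floundered, Lemma~\ref{lem_im_fl} supplies a successful $SF(P)$ derivation of $R$ using only delay clauses, and Lemma~\ref{lem_subs_efs} lets me choose it so that the distinct variables of $R$ bind to distinct extraneous terms (the infinite supply of extraneous terms ensuring global distinctness across $F_1,\dots,F_k$). Concatenating the two parts yields $D'$: its prefix renames the variables of $G$ and its suffix binds each surviving variable of $G$ to a distinct extraneous term, exactly as required.

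The reverse direction is where the real work lies. I would appeal to computation-rule independence of successful derivations (the discussion preceding Lemma~\ref{lem_mgi_Gth}) to reorder $D'$, keeping the same clause selection but selecting every program-clause atom before any delay-clause atom; this is legitimate since $SF(P)$ has no delays. The program-clause prefix then ends in a resolvent $R$ all of whose atoms are resolved by delay clauses, and reading the prefix back in $P$ gives a candidate floundered SLDF derivation whose final resolvent is $R$. By Lemma~\ref{lem_im_fl} each atom of $R$ is immediately floundered, so $R$ is immediately floundered and the prefix is genuinely a floundered derivation. For the computed answer, the prefix uses only clauses of $P$, whose unifiers introduce only program function symbols (Observation~\ref{obs_pfs}), whereas every extraneous binding in the computed answer of $D'$ comes from the delay-clause suffix; since the hypothesis forces each variable of $G$ to be bound to an extraneous term or simply renamed, the prefix cannot bind any variable of $G$ to a program term, so its restriction to $G$ is a renaming and the floundered computed answer is empty, giving $G \in NFS(P)$.

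The main obstacle I anticipate is justifying that the reordered program-clause prefix is a genuine \emph{safe} SLDF derivation in $P$ --- that every atom it selects is callable at the moment of selection, and that no delayable atom is resolved by a program clause in a way that would move the floundered frontier. I would discharge this using closure of the callable atom set under instantiation together with the clause-selection/correspondence machinery behind Lemma~\ref{lem_mgi_Gth}: because the prefix carries only program-function-symbol substitutions, the atoms it selects are no more instantiated than their counterparts in $D'$, and the binding condition on the computed answer prevents a $G$-relevant delayable atom from being resolved by a program clause without producing a forbidden program-term binding. Matching ``resolved by a delay clause'' with ``non-callable/immediately floundered'', rather than the mechanical reordering itself, is the delicate step.
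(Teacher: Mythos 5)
Your ``only if'' direction is essentially the paper's: replay $D$ in $SF(P)$ (which contains all clauses of $P$ and places no restriction on the computation rule), then extend the immediately floundered last resolvent to a successful derivation via Lemmas \ref{lem_im_fl} and \ref{lem_subs_efs}. That half is fine.

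The gap is in the ``if'' direction, and it sits exactly where you flag the ``main obstacle'': the reordering criterion is wrong, and the discharge you sketch for it does not work. You reorder $D'$ so that atoms \emph{resolved with program clauses} precede atoms \emph{resolved with delay clauses}. But ``resolved with a delay clause'' and ``non-callable'' coincide only in one direction: Lemma \ref{lem_im_fl} shows a callable atom can never be successfully resolved with a delay clause, but a \emph{non-callable} atom can perfectly well be resolved with an ordinary program clause in $SF(P)$, since $SF(P)$ has no delays. Consequently your program-clause prefix may select atoms that are not callable at the moment of selection, and reading it back in $P$ does not give a safe SLDF derivation. Your proposed repair --- that the binding condition on $\theta$ forbids a delayable atom from being resolved by a program clause --- fails whenever such a resolution binds only variables that do not reach $G$, or binds them merely to other variables. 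Concretely, take \texttt{p :- q(X), q(Y)} with \texttt{:- delay q(V) if var(V)} and fact \texttt{q(a)} (essentially Figure \ref{fig_extra_d}): the goal \texttt{p\_sf} has a successful derivation resolving \texttt{q\_sf(X)} with the program clause \texttt{q\_sf(a)} and \texttt{q\_sf(Y)} with the delay clause; the computed answer for \texttt{p} is empty, so the hypothesis is satisfied, yet your prefix selects the non-callable atom \texttt{q(X)}. (The lemma's conclusion still holds for this goal, but your construction does not establish it.) The paper instead reorders by \emph{callability}: the switching lemma is used to select callable atoms in preference to atoms that would delay in $P$. That prefix is safe by construction; by Lemma \ref{lem_im_fl} every atom it selects is matched with a clause of $P$ rather than a delay clause; and since some delay clause is used in $D'$, the prefix must eventually reach a resolvent with no callable atoms, i.e.\ an immediately floundered one. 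Your substitution argument (program clauses introduce only program function symbols, so any non-renaming binding of a variable of $G$ in the prefix would contradict the hypothesis on $\theta$) then goes through as in the paper.
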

\begin{proof}

(Only if) Derivation $D$ can be reproduced with $SF(P)$ since it has all
the clauses of $P$ and the computation rule is unrestricted.  By Lemma
\ref{lem_subs_efs} the last resolvent in $D$ must have a successful
derivation such that the computed answer substitution has the desired
property.

(If) By repeated application of the switching lemma \cite{Llo84} to $D'$
we can construct a successful derivation $D'' ~=~ G, G_1 , G_2 , \ldots,
G_n $ with $SF(P)$ such that callable atoms are selected in preference
to atoms which would delay in $P$.  The derivation has a prefix $D ~=~
G, G_1 , G_2 , \ldots, G_m$ where only callable atoms are selected,
except for $G_m$, which would be immediately floundered in $P$ (a
delay clause is used in $D''$ so an immediately floundered goal must
be reached at some stage).  Callable atoms are not matched with delay
clauses (by Lemma \ref{lem_im_fl}, if a callable atom is resolved with
a delay clause the resolvent cannot succeed).  Variables bound by the
computed answer substitution $\theta$ of $D''$ are bound to distinct
terms with extraneous principal function symbols (or simply renamed),
and all of the non-renaming bindings must be due to delay clauses.
Thus $D$ is a floundered SLDF derivation in $P$ with an empty (or
renaming) answer substitution.
\end{proof}

\begin{lemma}
\label{lem_sf_all_subs}
Given a program $P$, a goal $G$ has a successful derivation
with $SF(P)$ with computed answer, $G \theta$, such that all
variables bound by $\theta$ are bound to distinct terms with
extraneous principal function symbols (or are simply renamed)
if and only if there are successful derivations with all such
computed answers.
\end{lemma}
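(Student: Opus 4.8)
The plan is to prove the two directions separately, with essentially all the work in the forward (``only if'') direction. The backward (``if'') direction is immediate: because we assume infinitely many terms have an extraneous principal function symbol, there is always at least one way to assign distinct such terms to the finitely many bound variables, so the family of ``all such computed answers'' is non-empty, and picking any one of its derivations witnesses the left-hand side.

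For the forward direction I would start from the given successful derivation $D$ of $SF(P)$ with computed answer $G\theta$, where $\theta$ binds distinct variables $X_1,\ldots,X_n$ to distinct extraneous-headed terms $t_1,\ldots,t_n$ (and renames the remaining variables). Given any alternative assignment of distinct extraneous-headed terms $t_1',\ldots,t_n'$ to $X_1,\ldots,X_n$, the goal is to manufacture a successful derivation $D'$ whose computed answer $G\theta'$ sends each $X_i$ to $t_i'$. The construction mirrors $D$ step for step, using the same selected atoms: at each program-clause step $D'$ uses the same clause variant, while at each delay-clause step that introduces some $t_i$ in $D$ (a unification of a call to \texttt{evar/1} or \texttt{enonground/1} with one of its facts) $D'$ instead uses the fact introducing $t_i'$, which exists because \texttt{evar/1} and \texttt{enonground/1} are defined by facts for every appropriate term.

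The heart of the argument is an induction on the length of $D$ with the invariant that the $k^{th}$ resolvent of $D'$ is obtained from the $k^{th}$ resolvent of $D$ by simultaneously replacing each $t_i$ by $t_i'$. The observation that makes the program-clause steps go through is that each $t_i$ can only ever occupy a variable position relative to a program clause head: since $\theta$ binds $X_i$ to the extraneous-headed term $t_i$, and the clauses of $P$ contain only program function symbols (Observation \ref{obs_pfs}), $t_i$ is never decomposed---any attempt to match it against a program-function structure would fail, contradicting the success of $D$. Hence at a program-clause step the most general unifier is unaffected by the replacement $t_i \mapsto t_i'$; and because both $\{t_i\}$ and $\{t_i'\}$ are pairwise distinct, the pattern of equalities among them is identical, so each unification in $D'$ succeeds exactly when its counterpart in $D$ does, with the same effect up to the replacement. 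At a delay-clause step the invariant is maintained by the choice of fact noted above. Reading the invariant off the final resolvent yields the computed answer $G\theta'$, as required.

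The step I expect to be the main obstacle is making precise and justifying the claim that the $t_i$ are never decomposed by the clauses of $P$---equivalently, that each $t_i$ sits in a variable position whenever its containing atom is resolved against a program clause. This is what decouples the freely interchangeable extraneous bindings from the rigid program-clause skeleton of $D$, and it is exactly where the hypotheses that the $t_i$ have extraneous principal function symbols and that $D$ is successful do their work.
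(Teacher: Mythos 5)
Your proposal is correct and follows essentially the same route as the paper, whose proof is just a two-sentence remark: all extraneous bindings arise from delay clauses, and the sets of succeeding \texttt{evar/1} and \texttt{enonground/1} atoms are closed under swapping one extraneous-headed term for another. Your explicit induction, and in particular the observation that an extraneous-headed term can never be decomposed by a program clause head in a successful derivation, is exactly the detail the paper leaves implicit.
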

\begin{proof}
All such substitutions are due to delay clauses and the sets of
\texttt{enonground/1} and \texttt{evar/1} atoms which succeed
are closed under the operation of replacing one extraneous function
symbol with another.
\end{proof}

\begin{proposition}
\label{prop_fl_iff_sf_del}
  Goal $G$ has a floundered SLDF derivation $D$ with program $P$ if and only
  if it has a successful derivation $D'$ with $SF(P)$ in which delay clauses
  are selected.
\end{proposition}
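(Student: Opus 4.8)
The plan is to establish the biconditional by two essentially self-contained constructions that reuse the apparatus already developed for the empty-computed-answer case, namely Lemma \ref{lem_im_fl} and the switching-lemma argument underlying Lemma \ref{lem_nfs_iff_sf_del}.

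For the ``only if'' direction, suppose $G$ has a floundered SLDF derivation $D = G, G_1 \alpha_1, \ldots, G_N \alpha_N$ in $P$, so its last resolvent $G_N \alpha_N$ is immediately floundered and hence non-empty (an empty resolvent would be a success). First I would reproduce $D$ step for step in $SF(P)$: since $SF(P)$ contains every clause of $P$ and imposes no delay restriction on its computation rule, the same atom selections and the same program clauses yield the same resolvent $G_N \alpha_N$. This resolvent is an immediately floundered goal of $P$, so by Lemma \ref{lem_im_fl} it has a successful derivation in $SF(P)$ using only delay clauses. Concatenating the two pieces gives a successful derivation of $G$ in $SF(P)$ in which delay clauses are selected. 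No condition is placed on the resulting $SF(P)$ computed answer, so the extraneous bindings introduced while solving $G_N \alpha_N$ cause no difficulty.

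For the ``if'' direction, suppose $G$ has a successful derivation $D'$ in $SF(P)$ in which at least one delay clause is selected. I would mirror the ``(If)'' part of Lemma \ref{lem_nfs_iff_sf_del}: by repeated application of the switching lemma \cite{Llo84}, reorder $D'$ into a successful derivation $D''$ that selects callable atoms (those not delaying in $P$) in preference to atoms that would delay. Before any delay clause is used every resolvent consists only of original program atoms; while such a resolvent contains a callable atom the preference rule selects it, and since a callable atom resolved with a delay clause cannot succeed (Lemma \ref{lem_im_fl}) that step uses a program clause. Hence $D''$ has a prefix $D = G, G_1, \ldots, G_m$ in which only callable atoms are selected, each with a program clause, terminating at a resolvent $G_m$ every atom of which is non-callable; $G_m$ is non-empty because a delay clause is used later in $D''$, so $G_m$ is immediately floundered in $P$. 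The substitutions along $D$ arise solely from unifying callable atoms with program-clause heads, exactly as in SLDF resolution, so $D$ is a genuine floundered SLDF derivation of $G$ in $P$.

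The step I would treat most carefully is this ``if'' direction. Here the overall $SF(P)$ computed answer need not bind variables only to extraneous terms---program-clause steps may bind variables to ordinary program terms---so, unlike in Lemma \ref{lem_nfs_iff_sf_del}, I cannot invoke that lemma directly and must instead reason about the program-clause prefix in isolation, checking both that the switching-lemma reordering preserves the use of some delay clause (so the floundered tail is genuinely reached) and that the prefix's substitutions make it a legitimate SLDF derivation of $P$. Everything else rests on facts already in hand: that $SF(P)$ extends $P$, that immediately floundered goals correspond to delay-clause successes (Lemma \ref{lem_im_fl}), and that the callable atom set is closed under instantiation. As an alternative for the ``only if'' direction one could instead place the floundered computed answer $G \theta$ in $NFS(P)$ via Proposition \ref{prop_Gth_NFS}, apply Lemma \ref{lem_nfs_iff_sf_del} to obtain an $SF(P)$ success of $G \theta$ using delay clauses, and lift along the instance $G \theta$ of $G$; but the direct reproduction above seems cleaner.
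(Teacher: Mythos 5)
Your proof is correct, but it takes a genuinely more direct route than the paper's. The paper treats this proposition as a corollary of Lemma \ref{lem_nfs_iff_sf_del}: it first uses Propositions \ref{prop_Gth_NFS} and \ref{prop_Gth_ca} to reduce ``$G$ flounders'' to ``some instance of $G$ flounders with an empty floundered computed answer'', and then shows that an instance of $G$ has the special (all-extraneous-bindings) successful derivation in $SF(P)$ iff $G$ has some delay-clause-using successful derivation --- the ``only if'' half by closure property 1, and the ``if'' half by reordering $D'$ so that delay-clause steps come last and invoking Lemma \ref{lem_mgi_Gth} together with Lemma \ref{lem_subs_efs} to manufacture the instance $G\alpha_m$ and its special computed answer. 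You bypass the detour through $NFS(P)$ and instances entirely: your ``only if'' simply replays the floundered SLDF derivation inside $SF(P)$ (legitimate since $SF(P)\supseteq P$ and has an unrestricted computation rule) and closes off the non-empty immediately floundered last resolvent via Lemma \ref{lem_im_fl}; your ``if'' re-runs the switching-lemma normalization from the (If) part of Lemma \ref{lem_nfs_iff_sf_del} and correctly observes that the extraneous-binding condition on the computed answer is only needed to make the prefix's answer substitution a renaming --- for the bare existence of a floundered derivation the program-clause prefix ending in an immediately floundered resolvent suffices, and by Observation \ref{obs_pfs} that prefix stays within the language of $P$. What the paper's route buys is economy on the page (the switching argument appears exactly once, in Lemma \ref{lem_nfs_iff_sf_del}) at the cost of an extra layer of indirection through instances; what your route buys is a self-contained and arguably more transparent argument, at the cost of partially duplicating the reordering argument. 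You also correctly flag the one place where citing Lemma \ref{lem_nfs_iff_sf_del} directly would fail (the computed answer of $D'$ may bind variables to program terms), which is precisely the gap the paper's extra machinery exists to bridge.
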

\begin{proof}

Propositions \ref{prop_Gth_NFS} and \ref{prop_Gth_ca} imply $G$
flounders if and only if an instance flounders with an empty floundered
computed answer substitution so by Lemma \ref{lem_nfs_iff_sf_del} it is
sufficient to show that an instance of $G$ has a successful derivation
with $SF(P)$ in which delay clauses are selected and all variables bound
by the computed answer substitution are bound to distinct terms with
extraneous principal function symbols (or are simply renamed) iff $G$
has a successful derivation $D'$ with $SF(P)$ in which delay clauses
are selected.

(Only if) By closure property 1.

(If) Consider a derivation $G, G_1 \alpha_1, G_2 \alpha_2, \ldots,
G_N \alpha_N$ using the same clause selection as in $D'$ but with
a computation rule such that atoms resolved with delay clauses are
selected at the end, from $G_m \alpha_m$.  By Lemma \ref{lem_mgi_Gth},
$G \alpha_m$ has a derivation where the $m^{th}$ resolvent is a variant
of $G_m \alpha_m$ and the substitution at that point is a renaming
substitution for $G \alpha_m$.  By Lemma \ref{lem_subs_efs}, a computed
answer substitution for $G_m \alpha_m$ has the desired property.
\end{proof}

\begin{proposition}
\label{prop_ss_sf_p}
For any program $P$, $SS(SF(P)) = EFS(P) \cup SS(P)$.
\end{proposition}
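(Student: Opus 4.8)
The plan is to prove the set identity by two inclusions, working throughout over the ground Herbrand base built from both program and extraneous function symbols (so that, as in Section \ref{sec:ground}, $SS(P)$ already records non-ground successful computed answers through their extraneous-symbol instances). The organising observation is a dichotomy on any successful SLD derivation in $SF(P)$: either it uses no delay clause, or it uses at least one. Proposition \ref{prop_no_del_iff} shows the first kind corresponds exactly to success in $P$, while Lemmas \ref{lem_nfs_iff_sf_del} and \ref{lem_sf_all_subs} together with Proposition \ref{prop_fl_iff_sf_del} show the second kind corresponds to the encoded flounder set. Since extraneous function symbols are introduced only by the bodies of delay clauses (the \texttt{evar}/\texttt{enonground} atoms), these two cases are precisely what separate $SS(P)$ from $EFS(P)$.

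For the inclusion $EFS(P) \cup SS(P) \subseteq SS(SF(P))$ I would treat each piece separately. If $A \in SS(P)$ then $A$ has a successful SLD derivation in $P$ ignoring delays, which by the ``only if'' direction of Proposition \ref{prop_no_del_iff} is reproduced in $SF(P)$, so $A \in SS(SF(P))$. If $A \in EFS(P)$ then by definition $A = B\sigma$ for some program atom $B \in NFS(P)$ and some encoding substitution $\sigma$ sending the distinct variables of $B$ to distinct ground terms with extraneous principal function symbols. By the ``only if'' direction of Lemma \ref{lem_nfs_iff_sf_del}, $B$ has a successful derivation in $SF(P)$ using delay clauses whose computed answer binds every variable to a distinct extraneous-rooted term; Lemma \ref{lem_sf_all_subs} then promotes this single derivation to derivations realising \emph{all} such answers, in particular the one with computed answer $B\sigma = A$. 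As $A$ is ground this gives $A \in SS(SF(P))$.

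The reverse inclusion $SS(SF(P)) \subseteq EFS(P) \cup SS(P)$ is where the work lies. Take ground $A$ with a successful derivation $D$ in $SF(P)$ and split on whether $D$ uses a delay clause. If it does not, Proposition \ref{prop_no_del_iff} places $A \in SS(P)$. If it does, I would reconstruct the flounder-set witness exactly as anticipated by the definition of $EFS(P)$: let $B$ be the most specific generalisation of $A$ obtained by replacing every maximal subterm whose principal function symbol is extraneous by a variable, equal subterms by the same variable and distinct subterms by distinct variables. Then $B$ is a program atom, $A = B\sigma$ for the resulting encoding substitution $\sigma$, and it remains only to show $B \in NFS(P)$, after which $A \in EFS(P)$ follows at once. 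By the ``if'' direction of Lemma \ref{lem_nfs_iff_sf_del} it suffices to exhibit a successful derivation of $B$ in $SF(P)$ that uses delay clauses and whose computed answer binds all variables to distinct extraneous-rooted terms.

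The main obstacle is precisely this last step: lifting the ground derivation $D$ of $A$ to a derivation of the generalised goal $B$. The point to establish is that every extraneous function symbol occurring anywhere in $D$ originates in a delay-clause body, so the program-clause portion of $D$ never inspects the extraneous material and is an instance of the corresponding portion of a derivation of $B$; the delay-clause steps, which in $D$ merely verify \texttt{evar}/\texttt{enonground} on already-ground extraneous arguments, become in the lifted derivation the steps that \emph{bind} the fresh variables of $B$ to those same terms. I expect to formalise this by the most-general-instance reasoning of Lemma \ref{lem_mgi_Gth} (taking the instantiating substitution to be $\sigma$), together with Lemma \ref{lem_subs_efs} to guarantee that the delay-clause bindings can indeed send distinct variables to distinct extraneous-rooted terms, so that $\sigma$ itself is a legitimate computed answer of the lifted derivation; distinctness is consistent because positions carrying distinct variables of $B$ carry distinct subterms of $A$ by construction, hence are never unified by the program-clause steps.
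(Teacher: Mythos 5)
Your proof is correct and follows the paper's own argument: the same dichotomy on whether a successful derivation in $SF(P)$ uses a delay clause, with Proposition \ref{prop_no_del_iff} handling the first case and Lemmas \ref{lem_nfs_iff_sf_del} and \ref{lem_sf_all_subs} the second. The decode-and-lift step you spell out for the hard inclusion (generalising the ground atom $A$ to a program atom $B$ and lifting the derivation so that its computed answer has $A$ as an instance) is precisely the argument the paper leaves implicit in this proof and writes out later, in the proposition identifying the flagged atoms of $T^f_P \uparrow \omega$ with $EFS(P)$.
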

\begin{proof}
The set of atoms in $SS(SF(P))$ with derivations which don't use
delay clauses is $SS(P)$ by Proposition \ref{prop_no_del_iff}.
The set of atoms in $SS(SF(P))$ with derivations which use delay
clauses is $EFS(P)$ by Lemmas \ref{lem_nfs_iff_sf_del} and
\ref{lem_sf_all_subs}.
\end{proof}

Note that although there is a bijection between successful SLD
derivations with $P$ and successful SLD derivations with $SF(P)$ which
don't use delay clauses, there is not a bijection between floundered
SLDF derivations with $P$ and successful derivations with $SF(P)$
which use delay clauses, even if multiple solutions to \texttt{evar/1}
and \texttt{enonground/1} are ignored.  $SF(P)$ generally has additional
derivations.  This is unavoidable due to the imprecision of $NFS(P)$
mentioned in Section \ref{sec:sldf}.
\begin{figure}
\begin{verbatim}
p(X, Y) :- q(X), q(Y).                 p_sf(X, Y) :- q_sf(X), q_sf(Y).

:- delay q(V) when var(V).             q_sf(V) :- evar(V).
q(a).                                  q_sf(a).
\end{verbatim}
\caption{Extra derivations with $SF(P)$}
\label{fig_extra_d}
\end{figure}
For example, consider the definition of \texttt{p/2} in Figure \thefigure.
The success set, ignoring delays, is \{\texttt{p(a,a)}\} and $NFS(P)$ is
\{\texttt{p(X,Y)}, \texttt{p(a,V)}, \texttt{p(V,a)}\}.  Thus the computed
answers of \texttt{p\_sf(X,Y)} encode all these four atoms, since $SF(P)$
computes the union of the success set and the encoded flounder set.
However, \texttt{p(X,Y)} only has one floundered derivation, with the
empty answer substitution.  The other two atoms in $NFS(P)$ are only computed
for more instantiated goals and the derivations in $SF(P)$ correspond to
these computations (the same atoms are selected, ignoring \texttt{evar/1}
and \texttt{enonground/1}).

\begin{figure}
\begin{verbatim}
p :- q(X).                             p_sf :- q_sf(X)
 
:- delay q(V) when var(V).             q_sf(V) :- evar(V).
q(a).                                  q_sf(a).
q(X) :- q(X).                          q_sf(X) :- q_sf(X).
\end{verbatim}
\caption{Extra (unbounded) derivations with $SF(P)$}
\label{fig_extra_du}
\end{figure}

It is also possible to have successful derivations in $SF(P)$
which do not correspond to any SLD or SLDF derivation in $P$.
For example, in Figure \thefigure, the goal \texttt{p} has
a single floundered SLDF derivation, where \texttt{q(X)}
immediately flounders, whereas \texttt{p\_sf} has an infinite
number of derivations which use delay clauses and the derivations
of \texttt{q(X)} have unbounded length.  This is related to the
fact that \texttt{p} has an infinite SLD tree.

\subsection{Analysis using $SF(P)$}

Type dependencies of $SF(P)$ can be analysed in the same ways
as any other Prolog program.  The following set of atoms, where
$l(X)$ means $X$ is a list, is a model of the transformed reverse
program, showing these type dependencies hold (and thus they hold
for computed answers in the original reverse program with delays):
\[\{append(A,B,C) | (l(A) \wedge l(B)) \leftrightarrow l(C)\}\cup
\{reverse(A,B) | l(A) \leftrightarrow l(B)\}
\]
It is not necessary to consider the complex procedural semantics
of Prolog with delays, or even the procedural semantics of Prolog
without delays since bottom-up analysis can be used.  Similarly,
the $SF()$ transformation makes it relatively easy to show that
\texttt{submaxtree/2} can indeed compute a tree of integers when
given a tree of integers as the first argument.

Groundness in $P$ can also be analysed by analysing $SF(P)$
using specialised types.  We can define the type $ground$ to be
the set of terms constructed from only program function symbols.
The dependencies which hold for lists above also hold for type
$ground$, indicating the corresponding groundness dependencies
hold for computed answers of reverse with delays.  Similarly,
$nonvar$ can be defined as the set of terms with a program
principal function symbol.  By extending the type/mode checker
described in \cite{modes} we have demonstrated it is possible
to check non-trivial useful properties of $P$ by checking models
of $SF(P)$.  For more complicated cases it is necessary to support
sub-types, as $nonvar$ and $list$ are both subtypes of $ground$.

This approach to groundness analysis is not reliant on the $SF()$
transformation---it can be applied to any logic program due
to Observation \ref{obs_pfs}.  The analysis can be identical
to conventional groundness analysis using Boolean functions
because logic programs can be abstracted in an identical way.
A unification $X = f(Y_1, Y_2, \ldots, Y_N)$ can be abstracted
by $X \leftrightarrow Y_1 \wedge Y_2 \wedge \ldots \wedge Y_N$,
assuming $f/N$ is a program function symbol.  Calls $evar(X)$
and $enonground(X)$ can be abstracted as $X \leftrightarrow False$.

\section{New characterisations of the flounder set}
\label{sec:cfs}

We now present a second transformation, which allows us to capture the
non-ground flounder set more precisely.

\subsection{The $F()$ transformation}

The results here suggest a solution to the open problem
posed in \cite{naish:ctl:92}: how the flounder set can be
defined inductively.  Such a definition may be a very useful
basis for analysis of floundering as an alternative to a purely
model theoretic approach.  The semantics of $SF(P)$ captures
both successful and floundered derivations of $P$.  By defining
a variant of the immediate consequence operator $T_P$ we can
distinguish atoms with derivations which use delay clauses.

\begin{definition}
  An \emph{f-interpretation} is a set of ground atoms, some of which may be
  flagged (to indicate floundering). If $I$ is an f-interpretation, $A(I)$
  is the set of atoms in $I$ and $FA(I)$ is the set of atoms in $I$ which
  are flagged.
  The union of two f-interpretations $I$ and $J$ is the f-interpretation
  $K$ such that $A(K) = A(I) \cup A(J)$ and $FA(K) = FA(I) \cup FA(J)$.
\end{definition}

\begin{definition}
Given a program $P$, $T^f_P$ is a mapping from f-interpretations to
f-interpretations, defined as follows.  $A(T^f_P(I)) = T_{SF(P)}(A(I))$
and an atom $A$ in this set is flagged if there is a ground instance of a
clause in $SF(P)$, $A \leftarrow B_1, B_2, \ldots, B_k$, such that each
$B_i$ is in $I$ and some $B_i$ is flagged in $I$ or if the predicate of
$A$ is \texttt{evar/1} or \texttt{enonground/1}.
$T^f_P \uparrow n$ and $T^f_P \uparrow \omega$ are defined in the same
way as $T_P \uparrow n$ and $T_P \uparrow \omega$.
\end{definition}

\begin{proposition}
\label{prop_flag_iff_sf_del}
A ground  atom other than \texttt{evar/1} or \texttt{enonground/1}
is flagged in $T^f_P \uparrow n$ if and only if it has a
proof tree of height $\leq n$ in $SF(P)$ which uses a delay clause,
and is flagged in $T^f_P \uparrow \omega$ if and only if it has a
successful derivation in $SF(P)$ which uses a delay clause.
\end{proposition}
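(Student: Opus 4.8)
The plan is to prove both biconditionals by leveraging the standard relationship between the immediate consequence operator and proof trees (see \cite{Llo84}) and to handle the flags by an induction on the iteration level $n$. First I would record the easy but essential fact that $A(T^f_P \uparrow n) = T_{SF(P)} \uparrow n$, which is immediate from the defining equation $A(T^f_P(I)) = T_{SF(P)}(A(I))$ together with the observation that the flagging clause never alters the underlying atom set. Consequently the classical result of \cite{Llo84} applies verbatim to the atom component: an atom lies in $A(T^f_P \uparrow n)$ iff it has a proof tree of height $\leq n$ in $SF(P)$, and lies in $A(T^f_P \uparrow \omega)$ iff it has a successful SLD derivation in $SF(P)$. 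The entire content of the proposition therefore concerns \emph{when} such an atom is flagged.

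Second, I would isolate the behaviour of \texttt{evar/1} and \texttt{enonground/1} atoms. By the second disjunct of the flagging rule such an atom is flagged exactly when it enters the atom set, and since these predicates are defined solely by delay clauses, every proof tree for such an atom uses a delay clause. This is what lets me strengthen the induction hypothesis to cover \emph{all} ground atoms: for an \texttt{evar}/\texttt{enonground} atom, ``flagged in $T^f_P \uparrow n$'' is equivalent to ``having a proof tree of height $\leq n$'' (which automatically uses a delay clause), while for every other atom it is equivalent to ``having a proof tree of height $\leq n$ that uses a delay clause''.

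The core is the induction on $n$ for the finite case. The base case $n = 0$ is vacuous, since $T^f_P \uparrow 0$ is empty and proof trees have positive height. For the step, a non-\texttt{evar}/\texttt{enonground} atom $A$ is flagged in $T^f_P \uparrow (n+1)$ iff (only the first disjunct can apply) there is a ground clause instance $A \leftarrow B_1, \ldots, B_k$ of $SF(P)$ with every $B_i \in A(T^f_P \uparrow n)$ and some $B_j$ flagged in $T^f_P \uparrow n$. I would split on whether this clause is a delay clause. If it is, its body consists of \texttt{evar}/\texttt{enonground} atoms, and the proof trees of height $\leq n$ for the $B_i$ assemble into a tree of height $\leq n+1$ that uses a delay clause at the root; conversely any proof tree for $A$ whose root clause is a delay clause supplies a flagged body atom, so $A$ is flagged. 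If the clause is ordinary, the flagged $B_j$ gives, by the induction hypothesis, a proof tree of height $\leq n$ using a delay clause, which sits inside a tree of height $\leq n+1$ for $A$; conversely a proof tree for $A$ with ordinary root clause that uses a delay clause must do so inside some subtree, and the induction hypothesis flags the corresponding $B_j$. Matching these cases against ``delay clause used at the root'' versus ``delay clause used strictly below the root'' closes the step.

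Finally, for the $\omega$ statement I would check that $T^f_P$ is monotone in the ordering comparing both atom sets and flag sets, which holds because $T_{SF(P)}$ is monotone and the flagging condition is monotone in $I$ (the \texttt{evar}/\texttt{enonground} disjunct being independent of $I$). Hence the iterates form an increasing chain and an atom is flagged in $T^f_P \uparrow \omega$ iff it is flagged in some $T^f_P \uparrow n$; combining the finite case with the equivalence between finite proof trees and successful SLD derivations in $SF(P)$ yields the $\omega$ claim. I expect the main delicacy to be the bookkeeping that keeps the \texttt{evar}/\texttt{enonground} atoms in step with the general atoms within a single induction---in particular making sure that a delay clause \texttt{A :- C'} always has a nonempty body of \texttt{evar}/\texttt{enonground} atoms, so that the first flagging disjunct genuinely fires whenever such a clause is used at the root.
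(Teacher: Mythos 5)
Your proposal is correct and takes essentially the same approach as the paper: identify $A(T^f_P \uparrow n)$ with $T_{SF(P)} \uparrow n$ via the standard proof-tree characterisation, then argue that flags track exactly the use of \texttt{evar/1} and \texttt{enonground/1}, i.e., of delay clauses. The paper compresses your induction on $n$ (and the monotonicity argument for the $\omega$ case) into a single sentence; your version merely makes those steps explicit.
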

\begin{proof}
A standard result is that $T_{SF(P)} \uparrow \omega$ (and $T_{SF(P)}
\uparrow n$) contains exactly those ground atoms with proof trees in
$SF(P)$ (of height $\leq n$, respectively).  $A(T^f_P \uparrow n) =
T_{SF(P)} \uparrow n$ since $\forall ~I ~ A(T^f_P(I)) = T_{SF(P)}(I)$.
From the definition of $T^f_P$, these atoms are flagged if and only if
they are derived using \texttt{evar/1} or \texttt{enonground/1}, that is,
if a delay clause is used in the derivation.
\end{proof}

\begin{proposition}
  A ground atom  $A$ other than \texttt{evar/1} or \texttt{enonground/1}
  is flagged in $T^f_P \uparrow \omega$ if and only if it
  is in the encoded flounder set of $P$.
\end{proposition}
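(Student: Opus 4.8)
The plan is to derive the statement as a short corollary by chaining Proposition~\ref{prop_flag_iff_sf_del} with the decomposition of $SS(SF(P))$ that underlies Proposition~\ref{prop_ss_sf_p}. No fresh induction or unification argument should be needed; all the technical work has already been done in the lemmas, and the content here is entirely in lining up the existing equivalences correctly.

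First I would apply Proposition~\ref{prop_flag_iff_sf_del} to eliminate the operator $T^f_P$ altogether. For a ground atom $A$ other than an \texttt{evar/1} or \texttt{enonground/1} atom, that proposition states that $A$ is flagged in $T^f_P \uparrow \omega$ if and only if $A$ has a successful derivation in $SF(P)$ that uses a delay clause. This recasts the claim into a purely operational statement about the successful derivations of $SF(P)$. Next I would observe that a ground atom has a successful derivation in $SF(P)$ exactly when it lies in $SS(SF(P))$, so the remaining task is to identify which members of $SS(SF(P))$ admit a delay-clause-using derivation. Here I would invoke the decomposition established inside the proof of Proposition~\ref{prop_ss_sf_p}: by Lemmas~\ref{lem_nfs_iff_sf_del} and~\ref{lem_sf_all_subs}, the atoms of $SS(SF(P))$ possessing a derivation that uses delay clauses are precisely those of $EFS(P)$. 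Chaining the two equivalences then gives that $A$ is flagged in $T^f_P \uparrow \omega$ if and only if $A \in EFS(P)$, which is the statement.

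Rather than a hard obstacle, the care required is in the quantifiers and the excluded predicates. I must keep the flagging condition read existentially---flagged means $A$ has \emph{some} successful derivation through a delay clause---so that an atom lying in both $SS(P)$ and $EFS(P)$ (an atom may have both a delay-free and a delay-using derivation, as in Figure~\ref{fig_extra_d}) is still correctly flagged and correctly placed in $EFS(P)$. Since the union in Proposition~\ref{prop_ss_sf_p} is not disjoint, I would avoid any step that treats the two kinds of derivation as mutually exclusive.

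The one inclusion that genuinely leans on the lemmas is ``flagged $\Rightarrow A \in EFS(P)$'': I rely on Lemma~\ref{lem_sf_all_subs} to guarantee that a delay-clause derivation can be chosen so that distinct original variables are bound to distinct terms with extraneous principal function symbols, ensuring the resulting ground atom is a proper encoding of an element of $NFS(P)$ and hence a genuine member of $EFS(P)$. Finally, the restriction to atoms other than \texttt{evar/1} and \texttt{enonground/1} is consistent on both sides: $EFS(P)$ contains only encoded program atoms, so the always-flagged auxiliary atoms are correctly excluded from the biconditional.
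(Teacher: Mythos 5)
Your opening move --- using Proposition \ref{prop_flag_iff_sf_del} to replace ``flagged in $T^f_P \uparrow \omega$'' by ``has a successful derivation in $SF(P)$ which uses a delay clause'' --- is exactly the paper's first step. The gap is in what you do next. The equivalence you then invoke, that the ground atoms of $SS(SF(P))$ possessing a delay-clause derivation are precisely those of $EFS(P)$, cannot simply be read off Lemmas \ref{lem_nfs_iff_sf_del} and \ref{lem_sf_all_subs}: those lemmas speak about (generally non-ground) goals $G$, membership of $NFS(P)$, and computed answers whose bindings are distinct terms with extraneous principal function symbols, whereas the claim you need is about ground atoms and $EFS(P)$. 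Bridging the two is the actual content of this proposition, and it is what the paper's proof supplies. For the direction ``delay-clause derivation $\Rightarrow A \in EFS(P)$'' one must \emph{decode} $A$ to the atom $B$ obtained by replacing distinct extraneous-rooted terms by distinct variables, lift the given derivation of $A$ to a derivation of $B$ with the same clause selection (so that its computed answer $B\theta$ has $A$ as an instance), and then argue that, because $A$ is the encoding of $B$, the bindings in $\theta$ are forced to be distinct extraneous-rooted terms; only then does Lemma \ref{lem_nfs_iff_sf_del} yield $B \in NFS(P)$ and hence $A \in EFS(P)$. None of this decoding-and-lifting appears in your sketch; appealing instead to the one-sentence assertion inside the proof of Proposition \ref{prop_ss_sf_p} is close to circular, since that assertion is itself only substantiated by the argument you are being asked to give.

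A secondary point: you have the roles of the lemmas reversed. Lemma \ref{lem_sf_all_subs} is not what carries the ``flagged $\Rightarrow A \in EFS(P)$'' direction --- there the binding condition comes for free from $A$ being an instance of the lifted computed answer. It is needed in the \emph{converse} direction: given $A = B\gamma$ with $B \in NFS(P)$, Lemma \ref{lem_nfs_iff_sf_del} only provides \emph{some} derivation of $B$ whose answer binds variables to distinct extraneous terms, and Lemma \ref{lem_sf_all_subs} is what lets you choose the particular extraneous terms prescribed by $\gamma$, so that $A$ is an instance of a computed answer and therefore itself has a successful delay-clause derivation. Your observations about the non-disjointness of the union in Proposition \ref{prop_ss_sf_p} and the exclusion of \texttt{evar/1} and \texttt{enonground/1} are correct but peripheral to the missing argument.
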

\begin{proof}
By Proposition \ref{prop_flag_iff_sf_del} it is sufficient to show
that $A \in EFS(P)$ iff $A$ has a successful derivation $D$ in $SF(P)$
which uses a delay clause.

If: The decoded version of $A$ (distinct terms with extraneous principal
function symbols are replaced by distinct variables), $B$, has a
successful derivation $D'$ in $SF(P)$ using the same clause selection
as that in $D$, with a computed answer $B \theta$, which has $A$ as an
instance.  Since $B \theta$ has $A$ as an instance, any variables bound
by $\theta$ must be bound to distinct terms with extraneous principal
function symbols (or simply renamed).  Thus $D'$ satisfies the condition
of Lemma \ref{lem_nfs_iff_sf_del} so $B$ is in $NFS(P)$.

Only if: $A \in EFS(P)$, so $A = B \gamma$, where $B \in NFS(P)$.
By Lemmas \ref{lem_nfs_iff_sf_del} and \ref{lem_sf_all_subs}, $B$
has a derivation which uses delay clauses and has a computed answer
with an instance $B \gamma$.  $A$ has a successful derivation
using the same clause selection.
\end{proof}

Thus we have an inductive/fixed-point characterisation of (a set
isomorphic to) the non-ground flounder set.  It may be practical to base
floundering analysis on $T^f_P$.  It is monotonic with respect to the
set of atoms ($A(T^f_P(I)) \subseteq A(T^f_P(I'))$ if $A(I) \subseteq
A(I')$) and for a given set of atoms it is monotonic with respect to
the flagged atoms in the set ($FA(T^f_P(I)) \subseteq FA(T^f_P(I'))$
if $A(I) = A(I')$ and $FA(I) \subseteq FA(I')$).  Monotonicity is
important for the structure of fixed-points, particularly the existence
of a least fixed-point.  Alternatively, the definition of $T^f_P$ can
be mirrored by a further transformation which produces a program whose
success set is the encoded flounder set of $P$.   An advantage is that
it can then be analysed using standard techniques.  A disadvantage is
that the transformation increases the program size, which will affect
analysis time.

\begin{definition}
Given a Horn clause program $P$, $F(P)$ is the program consisting of
the predicate definitions in $SF(P)$ (we assume each predicate has a
$sf$ subscript/postfix) plus the following new definitions.  For each
clause $p_{sf}(\bar{X}) \texttt{:-} B$ in $SF(P)$ we add a clause
$p_{f}(\bar{X}) \texttt{:-} B'$.  For delay clauses, $B' = B$. For other
clauses, $B' = B,D$, where $D$ is the disjunction of all calls in $B$,
with ``\texttt{\_sf}'' replaced by ``\texttt{\_f}''.  If $B$ is the
empty conjunction (\texttt{true}) then $B'$ is the empty disjunction
(\texttt{fail}).
\end{definition}

\begin{figure}
\begin{verbatim}
append_f(As, Bs, Cs) :- evar(As), evar(Cs).
append_f([], As, As) :- fail.
append_f(A.As, Bs, A.Cs) :-
        append_sf(As, Bs, Cs), append_f(As, Bs, Cs).

reverse_f(As, Bs) :- evar(As), evar(Bs).
reverse_f([], []) :- fail.
reverse_f(A.As, Bs) :-
        reverse_sf(As, Cs), append_sf(Cs, [A], Bs),
        (reverse_f(As, Cs) ; append_f(Cs, [A], Bs)).
\end{verbatim}
\caption{Computing the flounder set for \texttt{reverse}}
\end{figure}

Figure \thefigure\ gives the new clauses generated for \texttt{reverse}.
Note that we assume the original program consists of only Horn clauses
but the transformed program contains disjunctions.  These could be
eliminated by further transformation.  The transformation is designed so
that $T_{F(P)}$ (extended to handle disjunctions) is essentially the same
as $T^f_P$: flagged atoms correspond to the $f$ subscripted predicates
and the set of all atoms corresponds to the $sf$ subscripted predicates.
The success set of the $f$ subscripted predicates in $F(P)$ is the
encoded non-ground flounder set of the corresponding predicates in $P$.

\subsection{Analysis using $F(P)$}
\label{sec_using_f}

The transformation allows us to observe the floundering
behaviour of the original program very clearly.  If we
define \texttt{evar} as in Figure \ref{fig_evar} and run
the goal \texttt{append\_f(X,Y,Z)} using a fair search
strategy, we get computed answers of the form
\verb@X = [A@$_1$\verb@,A@$_2$\verb@,...,A@$_N$\verb@|'VAR'(B)]@,
\verb@Y = C@,
\verb@Z = [A@$_1$\verb@,A@$_2$\verb@,...,A@$_N$\verb@|'VAR'(D)]@.
Occurrences of \texttt{'VAR'/1}
in answers correspond to variables in computed answers
of floundered derivations of the original program and variables
correspond to arbitrary terms.  Thus a call
to \verb@append@ flounders if and only if it has an instance
such that the first and third arguments are ``incomplete lists''
(lists with a variable at the tail rather than nil) of the same
``length'' with pair-wise identical elements.  For example,
\verb@append(X,[a],[a|Z])@ flounders (it also has a successful
derivation) whereas \verb@append([a,V|X],Y,[V,b|Z])@ does not.
Running \verb@reverse_f@ we discovered to our surprise (as
mentioned in section \ref{sec:examples}) that \verb@reverse@
flounders if and only if the first argument is an incomplete
list and the second argument is a \emph{variable} (rather than
incomplete list).  A call such as \verb@reverse(X,[a|Y])@ returns
an infinite number of answers rather than floundering!

With a suitably expressive domain the transformed program can be
analysed with established techniques to obtain precise information
about the original program with delays.  Powerful techniques have
been developed to help construct domains.  For example, we can
start with a simple domain containing four types: lists, $var$
(the complement of our type $nonvar$), incomplete lists (this
is a supertype of $var$), and a ``top'' element (the universal
type).  Completing this domain using disjunction \cite{cousot2}
adds two additional elements: ``list or var'' and ``list or
incomplete list''.  The Heyting completion \cite{GS98} of this
domain introduces implications or dependencies such as $X$ is a
list if $Y$ is a list.  This domain can be used as a basis for
interpretations of the program and to infer and express useful
information about floundering.
\begin{figure}
\[reverse_{f}(X,Y) = X \in il \wedge Y \in v \]
\[reverse_{sf}(X,Y) = reverse_{f}(X,Y) ~ \vee ~  X \in l \wedge Y \in l\]
\[append_{f}(X,Y,Z) =
X \in il \wedge Z \in il \wedge ( X \in v \leftrightarrow Z \in v)\] 
\[append_{sf}(X,Y,Z) = append_{f}(X,Y,Z) ~ \vee ~\]
\[\hspace{3cm}
X \in l \wedge ( Y \in l \leftrightarrow Z \in l)
\wedge ( Y \in il \leftrightarrow Z \in il )\] 
\caption{A model including the flounder set for \texttt{append}
and \texttt{reverse}}
\label{fig_fs_nrev}
\end{figure}
For example, Figure \thefigure\ gives the minimal model of the program
for this domain, where $v$ represents the type $var$, $l$ the type
$list$ and $il$ the set of incomplete lists (this was found using the
system described in \cite{modes}, with additional modifications and
manual intervention).  It expresses the fact that
\texttt{reverse} flounders only if the first argument is an incomplete
list and the second is a variable.  The condition for \texttt{append}
is somewhat more complex.  It is possible to drop the last conjunct
for $append_{sf}$ and replace $\leftrightarrow$ by $\rightarrow$ for
$append_{f}$ to obtain a simpler model.  Further simplification does
not seem possible without weakening the condition for \texttt{reverse}.

We note that careful design of the types in the domain is crucial for
the precision.  The incomplete list type is able to make the important
distinction between (encoded versions of) \verb@[X]@ and \verb@[[]|X]@.
Analysis without this distinction must conclude that calls to \verb@reverse/2@
where both arguments are (complete) lists may flounder.  To see this,
consider the following instance of the recursive clause for
\verb@reverse/2@.
\begin{verbatim}
reverse([a,X], [X]) :- append([X], [a], [X]), reverse([X], [X]).
\end{verbatim}
If we replace the two occurrences of \verb@[X]@ in \verb@append@ by
\verb@[[]|X]@ then the clause body flounders with an empty computed answer
substitution.  Thus any safe approximation to the set of floundering
atoms must include the head of this clause.

Inferring models is significantly more challenging than checking models.
The domain is huge and the models can be quite complex, even for simple
programs (see the condition for $append_{sf}$ in Figure \thefigure,
for example).  After some ad hoc attempts to find models for $F(P)$,
particularly minimum models within our abstract domain, a more systematic
approach was developed.  We use the relationship between predicates in
$P$ and their subscripted variants in minimum models.  We first compute a
model $A_P$ for $P$ (the minimum model for $P$ in our abstract domain).
We use this as a starting point to compute a (larger) model $A_{SF(P)}$ for
the \verb@"_sf"@ predicates.  We then use $A_{SF(P)} \setminus A_P$
as a starting point to compute a model for the \verb@"_f"@ predicates.
This strategy may also be useful for automatic inference of precise
floundering information since although there are three separate fixed-point
calculations, each one is relatively simple and should converge quickly.

\section{Declarative debugging, inadmissibility and semantics}
\label{sec:ddis}

Declarative debugging \cite{Sha83} can be an attractive alternative
to static analysis since more information is known at debug time
than at static analysis time and hence bugs can potentially be
located more easily and precisely.  The $F()$ transformation
of Section \ref{sec:cfs} potentially provides a mechanism for
declarative debugging of incorrectly floundered computations---a
floundered derivation of $P$ corresponds to a successful derivation
of $F(P)$ and debugging of incorrect successful derivations is well
understood.  The main novel requirement is that the user must be
able to determine which (encoded) atoms should flounder (that is,
an intended interpretation for the \texttt{\_f} predicates).  It is
also important for the debugger to understand the relationship
between the \texttt{\_f} and \texttt{\_sf} predicates because
their intended interpretations are not independent.

In \cite{naish:ddf} we propose a more practical approach which
doesn't use the transformations and encoding explicitly, but
does use them to guide the design.  It uses the three-valued
debugging scheme of \cite{ddscheme3}, where atoms can be correct,
incorrect or \emph{inadmissible}, meaning they should never
occur.  Atoms which have insufficiently instantiated ``inputs''
(and hence flounder) are considered inadmissible.  The user
effectively supplies a three-valued interpretation in the style of
\cite{naish:sem3neg} for $SF(P)$ and the debugger finds a clause
(possibly a delay clause) for which this interpretation is not
a (three-valued) model.  As well as model-theoretic semantics,
\cite{naish:sem3neg} provides a fixed-point semantics.  This could
also be applied to analysis of delays by using transformation and
encoding, particularly if the user specifies intended modes in
some way.

\section{Related work}
\label{sec:related}

The transformation-based method used to detect deadlocks in
parallel logic programs \cite{naish:deadlock} bears superficial
similarity to our work here.  However, those transformations do
not eliminate delays, and both the original code and transformed
code have impure features such as pruning operators for committed
choice non-determinism and nonvar checks.

Our approach to analysis of floundering here is unusual in that it
supports a declarative, ``bottom-up'' or ``goal independent''
approach.  Analysis of logic programs with the conventional left
to right computation rule has been done using both top-down
and bottom-up methods.  The top-down methods are based on the
procedural semantics---SLD resolution---maintaining information
about variables and substitutions to obtain approximations to the
sets of calls and answers to procedures.  The bottom-up methods
(which are independent of the computation rule) are based on
the fixed-point semantics (the immediate consequence operator,
which is very closely related to the model theoretic semantics)
to obtain approximations to the set of answers to procedures.

An advantage of the bottom-up approach is its simplicity.
Using the standard fixed point semantics \cite{vanEKow}
(see also \cite{Llo84}) the domain contains sets of ground
atoms and a clause can be treated as equivalent to the set of
its ground instances.  The disadvantage is lack of precision:
the naive bottom-up approach obtains no information about calls
or non-ground computed answers, both of which seem important for
modeling systems with flexible computation rules.  Two methods
are used to re-gain this information.  Non-ground computed answers
can be captured by using a more complicated immediate consequence
operator, such as the S-semantics, making the domain more complex
by re-introducing variables.  Calls can be captured by using
the magic set (or similar) transformation, adding complexity to
the program being analysed, but this assumes a left to right
computation rule.  Since there has been no known bottom-up
method for approximating the instantiation states of calls in
logic programs with delays, it is natural that most other work on
analysis of such programs \cite{CCC90} \cite{MSD90} \cite{CFM94}
\cite{delay-popl} \cite{CFMW97} \cite{spec-dyn-sch-iclp97}
\cite{cortessi_01} has been based on the top-down procedural
semantics.

The more recent approach of \cite{genaim08} uses bottom-up analysis, and
argues strongly for the practicality of bottom-up methods.  A relatively
standard bottom-up least fixed-point analysis is used to compute
groundness dependencies for successful computed answers of all predicates
using the $Pos$ domain (positive Boolean functions).  In addition,
a novel greatest fixed-point computation is used to find sufficient
conditions for predicates to be flounder-free, using the $Mon$ domain
(monotonic Boolean functions).  However, this analysis assumes a local
computation rule is used.  Programs such as \texttt{submaxtree/2} (and
examples given in \cite{genaim08}) have cyclic data-flow and do not work
with a local computation rule, so the greatest fixed-point computation
results in significant loss of precision.  Our transformations make no
assumptions about the computation rule other than it is safe with respect
to the delay declarations, so (in this respect) it can be more precise.

We have shown an alternative way the ``Lloyd'' semantics can be
adapted to capture information about variables: simply change the
set of function symbols rather than the immediate consequence
operator.  The extra function symbols allow us to encode and
capture the behaviour of non-ground atoms.  Furthermore, by
encoding the non-ground flounder set it becomes closed under
instantiation, allowing safe approximation by the success set of a
(transformed) program without delays.  Floundering information
can then be obtained by a simple bottom-up analysis using sets
of ground atoms.  The complexity associated with variables does
not magically disappear entirely.  In practice it can re-emerge
in the abstract domain of types used in the analysis.  However,
careful integration of type and instantiation information seems
unavoidable if analysis of floundering is to be precise, so
combining both in the type domain is probably a good idea.

Using the procedural semantics has the advantage of being (strictly)
more expressive than the declarative approach, so analysis of more
properties is possible.  Analysis of (for example) whether a particular
sub-goal will ever delay (for a particular computation rule) is beyond the
scope of our approach and can only be done with procedural information.
A disadvantage is the additional complexity.  Each (non-ground) atom has
a set of computed answers and for each one there is a set of immediately
floundered atoms. The analysis domain typically contains representations
of sets of these triples.  We believe that analysis of such things as
computed answers and whether a computation flounders is likely to benefit
from the declarative approach we have proposed, where the analysis domain
can contain just sets of ground atoms.  Expressive languages for defining
such sets have been developed for type-related analysis.

\section{Conclusion}
\label{sec:conc}

With an intuitive restriction on delay primitives, floundering is
independent of the computation rule.  However, the development of
a declarative rather than procedural understanding of floundering
has been hindered because it is not closed under instantiation.
In this paper we have shown how non-ground atoms can be encoded
by ground atoms, using function symbols which do not occur in
the program or goal.  Some may consider this to be a theoretical
``hack'', but it has numerous advantages.  This technique, along
with two quite simple program transformations, allows floundering
behaviour of a logic program with delays to be precisely captured
by the success set of a logic program without delays.  By simply
executing the transformed program using a fair search strategy,
the delaying behaviour can be exposed.  Declarative debugging can
be used to diagnose errors related to control as well as logic,
and alternative semantic frameworks can be applied.  Finally,
the wealth of techniques which have been developed for analysing
downward closed properties such as groundness and type dependencies
can be used to check or infer floundering behaviour.

\bibliographystyle{acmtrans}

\end{document}